\providecommand{\keywords}[1]{\textbf{Keywords:} #1}
\providecommand{\JEL}[1]{\textbf{JEL Classification:} #1}
\pgfplotsset{compat=newest}
\definecolor{refcolor}{rgb}{0, 0, 0.5}
\newcommand{\Rho}{\mathrm{P}}
\newtheorem{proposition}{Proposition}
\newtheorem{lemma}{Lemma}
\theoremstyle{plain}
\title{\date{\today}} 
\title{Changing Simplistic Worldviews\thanks{We would like to thank Ole Jann, Artyom Jelnov, Pavel Kocourek, Fedor Sandomirskiy, and Jan Zapal for their valuable comments and suggestions.}}
\author{
        Maxim Senkov\thanks{European Research University, U Haldy 200/18, 700 30 Ostrava, Czech Republic. email: maxim.senkov@eruni.org.} \\
        \and
	Toygar T. Kerman\thanks{Institute of Economics, Corvinus University of Budapest, F\H{o}v\'{a}m t\'{e}r 8, 1093 Budapest, Hungary. email: toygar.kerman@uni-corvinus.hu. ORCID: 0000-0003-3038-3666. The author acknowledges funding by the Hungarian National Research, Development and Innovation Office, Project Number K-143276.}  
}
\begin{document}

\maketitle

\begin{abstract}
\noindent We study a Bayesian persuasion model with two-dimensional states of the world, in which the sender (she) and receiver (he) have heterogeneous prior beliefs and care about different dimensions. 
The receiver is a naive agent who has a \emph{simplistic worldview}: he ignores the dependency between the two dimensions of the state. 
We provide a characterization for the sender's gain from persuasion both when the receiver is naive and when he is rational.
We show that the receiver benefits from having a simplistic worldview if and only if it makes him perceive the states in which his interest is aligned with the sender as less likely.
\end{abstract}

\noindent\keywords{Bayesian persuasion; misspecified prior; correlation neglect}

\noindent\JEL{D82, D83, D91}

\section{Introduction}
Standard models of Bayesian persuasion assume that the sender (she) and receiver(s) (he) care about the same dimension of the true state of the world. 
For example, in their seminal paper \cite{kamenica2011bayesian} consider a prosecutor (sender) and a judge (receiver), both of whom are only interested in one dimension of the true state, that is whether the defendant is guilty or innocent. 
What if the sender and receiver are concerned about different dimensions of the true state? 

Consider a company that wishes to build a new factory and only cares about the project's profitability. 
In order to be perceived as environmentally friendly, the company asks a sustainability consultant to evaluate the project and provide a recommendation on whether to continue or terminate it.
However, the consultant only cares about a different aspect of the project: its sustainability.  
How should the consultant conduct her research to convince the company to continue the project only when it is sustainable? 

When considering multidimensional states of the world, however, a behavioral aspect emerges: thinking about how multiple issues relate to each other is cognitively demanding. 
Experimental evidence suggests that people find it challenging to work with joint distributions of random variables and tend to underestimate or ignore the correlation between state variables or observed signals \citep{kallir2009neglect,eyster2011correlation,enke2019correlation}.

In this paper, we study how the receiver's \emph{simplistic worldview} (i.e. treating dependent state variables as independent) affects the sender's ability to influence his decision. 
We extend \cite{kamenica2011bayesian} by $(i)$ allowing for two-dimensional states of the world and $(ii)$ considering a receiver who ignores the dependency of state variables (which leads to heterogeneous priors). 
We model the receiver's disregard of state dependency as a misspecified prior belief; while the receiver knows the correct marginal distribution for each dimension, he assumes that they are independent. 
We characterize optimal disclosure both when the receiver is \emph{rational} (i.e. does not ignore dependency) and \emph{naive} (i.e. assumes that dimensions are independent), and analyze the welfare effects of the receiver's simplistic worldview.

\subsection{Illustrative example}\label{sec:ex}
Let us return to our example of a company and sustainability consultant. 
Both the company and the consultant are initially uncertain whether the project is \emph{profitable} $(P)$ or \emph{loss-making} $(L)$ and \emph{sustainable} $(S)$ or \emph{unsustainable} $(U)$.
The company chooses to either \emph{continue} or \emph{terminate} the project and only cares about its profitability. 
Continuing the project when it is profitable and terminating it when it is loss-making provide the company a utility of 1, otherwise, the company's utility is 0. 
On the other hand, the consultant only cares about the sustainability of the project.
Continuing the project when it is sustainable and terminating it when it is unsustainable provide the consultant a utility of 1, otherwise the consultant's utility is 0. 

Hence, the true state of the world has two dimensions, and the conflict of interest depends on the joint distribution of the dimensions. 
First, suppose that the company is rational.
That is, it shares a common prior with the consultant, where the prior distribution of the states is given as follows (with marginal distributions on the sides). 
\begin{table}[H]
\begin{centering}
\begin{tabular}{|c|c|c|c}
\cline{1-3} \cline{2-3} \cline{3-3} 
 & $L$ & $P$ & \tabularnewline
\cline{1-3} \cline{2-3} \cline{3-3} 
$U$ & 0.25 & 0.1 & 0.35\tabularnewline
\cline{1-3} \cline{2-3} \cline{3-3} 
$S$ & 0.35 & 0.3 & 0.65 \tabularnewline
\cline{1-3} \cline{2-3} \cline{3-3} 
\multicolumn{1}{c}{} & \multicolumn{1}{c}{0.6} & \multicolumn{1}{c}{0.4} & \tabularnewline
\end{tabular}
\par\end{centering}
\caption{Prior belief distribution.}
\label{tab:ex}
\end{table}

\noindent The company asks the consultant to investigate the profitability and sustainability of the project and provide a recommendation to either \emph{continue} $(c)$ or \emph{terminate} $(t)$ it. 
Suppose that if the company is indifferent between continuing and terminating the project, it continues. 
Since the project is initially more likely to be loss-making, the company would (ex ante) prefer the project to be terminated (which yields expected payoff $(0.65)\cdot 0+(0.35)\cdot 1=0.35$ to the consultant and $(0.6)\cdot 1+(0.4)\cdot 0=0.6$ to the company).
However, the consultant can improve her expected payoff by strategically disclosing information about the true state of the world. 
Mathematically, the information disclosure can be represented by a \emph{signal} $\pi$ that sends recommendations contingent on the state of the world. 
Consider an optimal signal given as follows.
\begin{table}[H]
\centering
\begin{tabular}{c|cccc}
$ \pi $ & $ (U,L) $ & $ (U,P) $ & $(S,L)$ & $(S,P)$ \\
\hline
$c$ & 0 & 0.5 & 1 & 1 \\[0.1cm]
$t$ & 1 & 0.5 & 0 & 0 
\end{tabular}
\end{table}

\noindent First, observe that $\pi$ recommends continuing with probability 1 when the project is both sustainable and profitable (i.e. in  state $(S,P)$) and recommends terminating with probability 1 when the project is both unsustainable and loss-making (i.e. in  state $(U,L)$). 
This is intuitive since in these states the preferred actions of the sender and receiver are aligned. 
However, observe that $\pi$ recommends to continue the project with probability $1$ also when it is sustainable and loss-making.\footnote{Note that the optimal signal is not unique.}
The consultant chooses the probability of $c$ in state $(U,P)$ such that the company's posterior belief that the project is profitable is exactly $1/2$ and thus it chooses to continue. 
The consultant's expected payoff of employing $\pi$ is $(0.3)\cdot 1+(0.35)\cdot 1+0.1\cdot (0.5)+(0.25)\cdot 1=0.95$, improving upon the case of no disclosure. 
Similarly, the company's expected payoff from $\pi$ is $(0.3)\cdot 1+(0.35)\cdot 0+(0.1)\cdot (0.5)+(0.25)\cdot 1=0.6$, which is equal to the case of no disclosure.

\begin{comment}
\begin{table}[H]
\centering
\begin{tabular}{c|cccc}
$ \pi $ & $ (U,L) $ & $ (U,P) $ & $(S,L)$ & $(S,P)$ \\
\hline
$c$ & 0 & 0 & 0.86 & 1 \\[0.1cm]
$t$ & 1 & 1 & 0.14 & 0 
\end{tabular}
\end{table}
\noindent First, observe that $\pi$ recommends continuing with probability 1 when the project is both sustainable and profitable (i.e. in  state $(S,P)$) and recommends terminating with probability 1 when the project is both unsustainable and loss-making (i.e. in  state $(U,L)$). 
This is intuitive since in these states the preferred actions of the sender and receiver are aligned. 
However, observe that $\pi$ recommends to terminate the project with probability 1 also when it is unsustainable and profitable.\footnote{Note that the optimal signal is not unique.}
The consultant chooses the probability of $c$ in state $(S,L)$ such that the company's posterior belief that the project is profitable is exactly $1/2$ and thus it chooses to continue. 
The consultant's expected payoff of employing $\pi$ is $(0.3)\cdot 1+(0.35)\cdot (0.86)+0.1\cdot 1+(0.25)\cdot 1=0.951$, improving upon the case of no disclosure. 
Similarly, the company's expected payoff from $\pi$ is $(0.3)\cdot 1+(0.35)\cdot (0.14)+(0.1)\cdot 0+(0.25)\cdot 1=0.6$, which is equal to the case of no disclosure.
\end{comment}

Now suppose that the company is naive and ignores the dependency between the states of the world due to insufficient knowledge of sustainability. 
In other words, the company assumes that the dimensions of the true state are independent, and thus has a misspecified prior. 
While the company knows the marginal distributions $\Pr(S)=0.65$ and $\Pr(P)=0.4$, it assumes that $\Pr(S,P)=\Pr(S)\Pr(P)=(0.65)\cdot (0.4)=0.26$.\footnote{That is, the company assumes that $\Pr(P)=\Pr(P\vert S)$.} 
The company's \emph{simplistic worldview} regarding the prior distribution is given as follows. 
\begin{table}[H]
\begin{centering}
\begin{tabular}{|c|c|c|c}
\cline{1-3} \cline{2-3} \cline{3-3} 
 & $L$ & $P$ & \tabularnewline
\cline{1-3} \cline{2-3} \cline{3-3} 
$U$ & 0.21 & 0.14 & 0.35\tabularnewline
\cline{1-3} \cline{2-3} \cline{3-3} 
$S$ & 0.39 & 0.26 & 0.65 \tabularnewline
\cline{1-3} \cline{2-3} \cline{3-3} 
\multicolumn{1}{c}{} & \multicolumn{1}{c}{0.6} & \multicolumn{1}{c}{0.4} & \tabularnewline
\end{tabular}
\par\end{centering}
\caption{Misspecified prior belief distribution.}
\label{tab:ex2}
\end{table}

\noindent Notice that the company now perceives states $(S,L)$ and $(U,P)$ as relatively more likely than before. 
Therefore, the sender has two options to keep the recommendation persuasive: $(i)$ recommend to terminate more often than before when the project is loss-making (i.e. decrease probability of $c$ in $(S,L)$) or $(ii)$ recommend to continue more often than before when the project is profitable (i.e. increase probability of $c$ in $(U,P)$). 
In other words, the consultant needs to make the recommendation to continue \emph{more informative} regarding the project's profitability.
It turns out that given the misspecified prior of the receiver, it is optimal for the sender to follow the second route.  
\begin{table}[H]
\centering
\begin{tabular}{c|cccc}
$ \hat\pi $ & $ (U,L) $ & $ (U,P) $ & $(S,L)$ & $(S,P)$ \\
\hline
$c$ & 0 & 0.93 & 1 & 1 \\[0.1cm]
$t$ & 1 & 0.07 & 0 & 0 
\end{tabular}
\end{table}

\noindent Indeed, when the project is unsustainable and profitable, the recommendation to continue is sent with probability $0.93$, while it was sent with probability $0.5$ in this state before. 
%Note that, while the recommendation to continue is sent more frequently than before when the project is loss-making, this is compensated by the fact that it is sent much more frequently than before when the project is profitable.  
The consultant's expected payoff of employing $\hat\pi$ is $(0.3)\cdot 1+(0.35)\cdot 1+(0.1)\cdot (0.07)+(0.25)\cdot 1=0.907$, still improving upon the case of no disclosure, albeit less than when the receiver is rational.  
On the other hand, the company's expected payoff from $\hat\pi$ is $(0.3)\cdot 1+(0.35)\cdot 0+(0.1)\cdot (0.93)+(0.25)\cdot 1=0.643$, which is higher relative to the case of having a correct prior belief (and also higher relative to the case of no disclosure).\footnote{We use misspecified model framework, which implies existence of correct prior belief (corresponding here to the sender's belief). Thus, the naive receiver's expected payoff is evaluated using the correct (sender's) prior belief.} 
Hence, while the company's simplistic worldview \emph{benefits} the company, it is \emph{detrimental} to the consultant. 

The example illustrates how the receiver \emph{benefits} from ignoring dependency between the state variables.
However, both $(i)$ the sender's approach regarding disclosure in states with misaligned preferences and $(ii)$ whether the receiver benefits from ignoring dependency crucially depend on the prior belief distribution.
In fact, the receiver might also be \emph{worse off} due to ignoring the dependency of states. 
In our analysis, we consider all possible prior belief distributions, provide a full characterization of the sender's gain from persuasion, and pin down the conditions under which the receiver benefits from having a simplistic worldview.  

\subsection{Related literature}
Our paper extends \cite{kamenica2011bayesian} in two aspects and contributes to three strands of literature. 
First, it relates to the literature on misspecified models. 
The receiver in our model knows the correct marginal distribution of each dimension, yet, assumes that the dimensions are independent.
Hence, the receiver possesses a misspecified prior belief about the states of the world. 
This naturally relates to studies that consider correlation neglect, as they motivate the behavioral assumption we make about the receiver. 
One main difference from such models (e.g. \cite*{levy2022persuasion}) is that our receiver ignores dependency between \emph{states}, whereas most papers in the literature focus on the receiver neglecting correlation between \emph{messages}.\footnote{\cite{eyster2005cursed} introduce \emph{cursed equilibrium} which assumes that players underestimate the correlation between other players' actions and information.} 

Our findings share similarities with those from studies focusing on the political applications of correlation neglect. 
\cite{ortoleva2015overconfidence} show that correlation neglect might lead to overconfidence in voters and that overconfidence is an important predictor for ideological extremeness. 
In our model, the misspecified prior belief of the receiver leads to the receiver having over-optimistic or over-pessimistic beliefs about the state of the world.  
\cite{levy2015correlation} show that voters neglecting the correlation between their information sources might contribute to higher political polarization in society, howbeit, voters might benefit from it.
Similarly, we illustrate that if the receiver ignores dependency between states then the optimal signal might be more informative.\footnote{This also relates to studies that investigate how a receiver can strategize to increase the informativeness of the signal \citep*{austen2000cheap,kartik2007note,ambrus2017delegation,tsakas2021resisting} and how different kinds of biases might affect persuasion \citep*{hagmann2017persuasion,augias2020persuading,de2022non}.} 

Our behavioral assumption regarding the receiver alters his perception of the joint belief distribution.
Yet, the receiver has no ambiguity about the message he observes. 
\cite*{eliaz2021persuasion} consider a sender who can redact his message once it is chosen, but the receiver only knows the message strategy and not the redaction strategy. 
In a related study, \cite*{eliaz2021strategic} consider a sender who can attach interpretations to messages. 
Unlike our model where the receiver's perception of the signal structure remains correct, both studies assume a distortion in the perception of the signal structure. 

Second, our paper contributes to the literature on Bayesian persuasion with heterogeneous priors.
\cite{alonso2016bayesian} assume that the sender and receiver have subjective prior beliefs and analyze its effect on the set of inducible posterior beliefs. 
\cite{galperti2019persuasion} focuses on a similar setup and considers receivers who might be reluctant to change their worldviews. 
In a related paper, \cite{kosterina2022persuasion} considers a sender who is ignorant about the receiver's prior and characterizes optimal information structures. 
In contrast to these studies, our model posits that the sender is aware of the accurate prior distribution while the receiver is not, and this discrepancy is common knowledge. 
Moreover, we contrast the receiver's payoff to the case in which he has the correct prior belief and and study the implications of incorrect prior for welfare. 

Third, our study relates to the literature on persuasion with multidimensional state variables. 
\cite{rayo2010optimal} consider a model of information disclosure and assume that the agent is rational, in which the sender's expected payoff is the product of posterior mean of the sender's and receiver's dimension of state, and characterize optimal disclosure. 
\cite{tamura2018bayesian} studies optimal disclosure in a more general setup similar to the model in \cite{rayo2010optimal}. 
In a related study, \cite{dworczak2019persuasion} characterize the optimal persuasion mechanism when the state is two-dimensional and the objective is quadratic. 
\cite{malamud2021persuasion} study a model where the sender observes the multidimensional state and show that it is optimal for the sender to ``reduce'' the dimensions of the state.
All studies above consider a rational sender and receiver. 
In contrast, our paper considers implications of the receiver's ignorance regarding the connection between the dimensions of state for the optimal signal and welfare.  
\cite*{babichenko2022regret} consider a sender who provides information to a receiver about a product, where each dimension represents a different attribute of the product. 
The authors extend the standard model also by assuming the sender does not know the utility of the receiver and wishes to minimize regret and show that ignorance of the utility of the receiver is extremely harmful to the sender. 
We, on the other hand, assume that agents have complete information about the utility functions, yet, the receiver  incorrectly assumes independence of the dimensions.

\section{Preliminaries}

\subsection{The model}
A sender (she) wishes to persuade a receiver (he) to take an action depending on the true state of the world. 
Let $\Omega=\Sigma\times R$ be the set of \emph{states} of the world, such that the sender's payoff depends only on $\Sigma=\{\sigma_0,\sigma_1\}$ and the receiver's payoff depends only on $R=\{\rho_0,\rho_1\}$.
For each $k,\ell\in\{0,1\}$, let $\omega_{k\ell}\equiv(\sigma_k,\rho_\ell)\in\Omega$.
 
A \emph{signal} consists of a finite set of \emph{messages} $M$ and a family of distributions $\{\pi(\cdot\vert\omega)\}_{\omega\in\Omega}$ over $M$.
That is, $\pi:\Omega\rightarrow\Delta(M)$ maps each state of the world to a probability distribution over $M$.
Denote the set of all signals by $\Pi$.

We start by considering a rational receiver, hence, the sender and receiver share a common prior belief $\mu\in\Delta(\Omega)$ about the true state of the world.
For each $k,\ell\in\{0,1\}$, let $\mu_{k\ell}\equiv\Pr(\sigma_k,\rho_\ell)$ and let the marginal distributions be denoted as $\mu(\sigma_k)\equiv\Pr(\sigma_k)$ and $\mu(\sigma_\ell)\equiv\Pr(\rho_\ell)$.
The distribution $\mu$ is presented in Table \ref{tab:prior} with marginal distributions given on the sides. 
\begin{table}[H]
\begin{centering}
\begin{tabular}{|c|c|c|c}
\cline{1-3} \cline{2-3} \cline{3-3} 
$(\Sigma,\Rho)$ & $\rho_0$ & $\rho_1$ & \tabularnewline
\cline{1-3} \cline{2-3} \cline{3-3} 
$\sigma_0$ & $\mu_{00}$ & $\mu_{01}$ & $\mu(\sigma_0)$\tabularnewline
\cline{1-3} \cline{2-3} \cline{3-3} 
$\sigma_1$ & $\mu_{10}$ & $\mu_{11}$ & \tabularnewline
\cline{1-3} \cline{2-3} \cline{3-3} 
\multicolumn{1}{c}{} & \multicolumn{1}{c}{$\mu(\rho_0)$} & \multicolumn{1}{c}{} & \tabularnewline
\end{tabular}
\par\end{centering}
\caption{Common prior $\mu$.}
\label{tab:prior}
\end{table}

Given $\mu\in\Delta(\Omega)$ and $\pi\in\Pi$, a message $m\in M^\pi$ generates the posterior belief $\mu^m\in\Delta(\Omega)$.
For each $k,\ell\in\{0,1\}$, the marginal posterior beliefs upon observing message $m$ are given by $\mu^m(\sigma_k)$ and $\mu^m(\rho_\ell)$, where 
\begin{equation*}
\mu^m(\rho_\ell)=\frac{\mu_{k\ell}\pi(m\vert \omega_{k\ell})+\mu_{\ell\ell}\pi(m\vert \omega_{\ell\ell})}{\mu_{k\ell}\pi(m\vert \omega_{k\ell})+\mu_{\ell\ell}\pi(m\vert \omega_{\ell\ell})+\mu_{\ell k}\pi(m\vert \omega_{\ell k})+\mu_{kk}\pi(m\vert \omega_{kk})}
\end{equation*}

\noindent and $\mu^m(\sigma_k)$ is given analogously.\footnote{We drop $\pi$ when denoting $\mu^m(\sigma_k)$ and $\mu^m(\rho_\ell)$ to ease notation.} 
We denote the posterior belief for the joint distribution upon observing message $m$ by $\mu^m(\sigma_k,\rho_\ell)$. 

The set of actions that are available to the receiver is given by $A=\{0,1\}$. 
The receiver wishes his action to match the dimension of the state he cares about, i.e., the receiver wants to choose 0 when the state is $(\sigma_k,\rho_0)$ and wants to choose 1 when the state is $(\sigma_k,\rho_1)$, for any $k\in\{0,1\}$. 
The receiver's utility function $u:A\times R\rightarrow\{0,1\}$ is given by 
\[
u\left(a,\rho\right)=\begin{cases}
1 & \text{if }\left(a,\rho\right)=\left(0,\rho_0\right)\text{ or }\left(a,\rho\right)=\left(1,\rho_1\right),\\
0 & \text{otherwise.}
\end{cases} 
\]

Given the receiver's preferences, it follows that the receiver is indifferent between $0$ and $1$ if $\mu^m(\rho_0)=\mu^m(\rho_1)=1/2$. 

The sender wishes the chosen action to match the dimension of the state she cares about. 
The sender's utility function $v:A\times\Sigma\rightarrow\{0,1\}$ is given by 
\[
v\left(a,\sigma\right)=\begin{cases}
1 & \text{if }\left(a,\sigma\right)=\left(0,\sigma_0\right)\text{ or }\left(a,\sigma\right)=\left(1,\sigma_1\right),\\
0 & \text{otherwise.}
\end{cases} 
\]

We assume that the the default action of the receiver is $0$, i.e. $\mu(\rho_1)<\mu(\rho_0)$.
Given $\pi\in\Pi$, the \emph{choice rule} $\alpha^\pi:M\rightarrow A$ of the receiver is given by
\begin{equation}
\alpha^\pi\left(m\right)=\begin{cases}
1 & \text{if }\mu^m(\rho_1)\geq\frac12,\\
0 & \text{otherwise.}
\end{cases}\label{action_strategy}
\end{equation}

\noindent That is, the receiver switches his action when he is indifferent upon observing a message.
% In the literature, the assuming of tie-breaking in favor of S-preferred action is used to ensure existence of solution to the S's optimization problem. Consider example of KG: if the obedience constraint is $q>\frac{1}{2}$ rather than $q\geq\frac{1}{2}$, then the prosecutor's problem does not have a solution as given any splitting $(0,q+\varepsilon)$, by continuity, the prosecutor will deviate to even smaller $\varepsilon>0$. Given the optimization problem that we have, inspecting the problem ensures that given the qostraint of the form $q\geq\frac{1}{2}$ (which we have), the S's problem has a well-defined solution, thus, tiue-breaking in favor of action 1, which we assume, is fine. Our S's problem has a well-defined solution given the assumption we made because it turns out to be optimal for S to choose such h to make the obedience constraint binding.

\subsection{Benchmark}
We start by characterizing optimal disclosure when the receiver does not ignore the dependency of states, i.e., when he is  rational. 
This will serve as a benchmark for the next sections. 

First, note that since both the receiver's action set and the payoff-relevant state are binary, we can restrict attention to \emph{direct} signals without loss of generality.\footnote{A signal is \emph{direct} if it holds that ($i$) $M\subseteq A$ and ($ii$) for all $a\in M$, $\alpha^{\pi}(a)=a$.}
Therefore, we hereon assume that $M=\{0,1\}$ and that $\Pi$ is the set of all direct signals.  
We denote a signal by $\pi=(\pi_{00},\pi_{01},\pi_{10},\pi_{11})$ where $\pi_{k\ell}\equiv\pi(1\vert\omega_{k\ell})$, for $k,\ell\in\{0,1\}$.
 
Depending on the prior, the sender may choose $\pi\in\Pi$ to induce a switch of the receiver's default action $0$ to $1$.
For this to be successful, message $1$ should induce a posterior belief $\mu^{1}(\rho_1)\geq 1/2$.
We formalize the obedience constraint in the following lemma. 

\begin{lemma}\label{lem:str}
Given $\pi\in\Pi$, the receiver chooses action $1$ if and only if $$\mu_{00}\pi_{00}+\mu_{10}\pi_{10}\leq \mu_{01}\pi_{01}+\mu_{11}\pi_{11}.$$
\end{lemma}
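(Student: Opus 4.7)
The plan is a one-step reduction via the receiver's choice rule \eqref{action_strategy} and Bayes' rule. Since we have restricted attention to direct signals with $M=\{0,1\}$, the message $m=1$ is the recommendation to take action $1$, and by \eqref{action_strategy} the receiver obeys this recommendation if and only if the induced marginal posterior satisfies $\mu^{1}(\rho_1)\geq 1/2$.

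I would expand $\mu^{1}(\rho_1)$ using Bayes' rule to obtain
\[
\mu^{1}(\rho_1)=\frac{\mu_{01}\pi_{01}+\mu_{11}\pi_{11}}{\mu_{00}\pi_{00}+\mu_{10}\pi_{10}+\mu_{01}\pi_{01}+\mu_{11}\pi_{11}},
\]
assuming the denominator, i.e. the unconditional probability that message $1$ is sent, is strictly positive. Multiplying through by this denominator and cancelling $\mu_{01}\pi_{01}+\mu_{11}\pi_{11}$ from both sides, the inequality $\mu^{1}(\rho_1)\geq 1/2$ is equivalent to
\[
\mu_{00}\pi_{00}+\mu_{10}\pi_{10}\leq \mu_{01}\pi_{01}+\mu_{11}\pi_{11},
\]
which is exactly the condition stated in the lemma. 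Both directions follow from the same equivalence, so we obtain the ``if and only if''.

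The only caveat is the degenerate corner in which every $\pi_{k\ell}=0$, so message $1$ is never sent and the posterior $\mu^1$ is undefined; then both sides of the stated inequality equal $0$, the condition holds vacuously, and the receiver simply stays at the default action $0$. Since the lemma concerns obedience to the recommendation $1$, this case is inconsequential. There is no real obstacle: the whole argument is a short algebraic manipulation, with the only care being the handling of the null-probability message, which can be dispensed with in a single sentence.
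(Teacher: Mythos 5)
Your proposal is correct and follows essentially the same route as the paper's proof: apply the choice rule \eqref{action_strategy}, expand $\mu^{1}(\rho_1)$ via Bayes' rule, and rearrange $\mu^{1}(\rho_1)\geq\tfrac12$ into the stated inequality. Your extra remark on the null-probability message is a harmless (and slightly more careful) addition that the paper leaves implicit.
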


Note that Lemma \ref{lem:str} implies that the obedience constraint for action $0$ given by $\mu_{01}(1-\pi_{01})+\mu_{11}(1-\pi_{11})<\mu_{00}(1-\pi_{00})+\mu_{10}(1-\pi_{10})$ is also satisfied. 
See Lemma \ref{lem:suff} in the appendix for the formal statement and proof. 

We can now state the sender's optimization problem, which is to maximize the ex-ante expected utility by choosing a signal:
\begin{align}
\max_{\pi\in\Pi}\: \mu_{11}\pi_{11}+\mu_{10}\pi_{10}&-\mu_{01}\pi_{01}-\mu_{00}\pi_{00}\nonumber\\
\text{s.t. } &\mu_{00}\pi_{00}+\mu_{10}\pi_{10}\leq \mu_{01}\pi_{01}+\mu_{11}\pi_{11}.\label{OP_soph}
\end{align}

Let $\pi^*$ denote an optimal signal when the receiver is rational.

\begin{proposition}\label{prop:bench}
An optimal signal when the receiver is rational is given by 
\begin{equation*}
(\pi^*_{00},\pi^*_{01},\pi^*_{10},\pi^*_{11})=\begin{cases}
(0,0,1,1) & \text{if }\mu_{11}\geq\mu_{10},\\
(0,\beta,\gamma,1) & \text{otherwise},
\end{cases}
\end{equation*}

\noindent where $\mu_{10}\gamma-\mu_{01}\beta=\mu_{11}$.
\end{proposition}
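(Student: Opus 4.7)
The plan is to treat the sender's problem in \eqref{OP_soph} as a linear program over $[0,1]^4$ with a single inequality constraint, and to exploit the asymmetric way the four choice variables enter the objective and the obedience constraint to collapse it to a two-variable problem.

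First I would observe that $\pi_{00}$ appears with coefficient $-\mu_{00}$ in the objective and with coefficient $+\mu_{00}$ on the left-hand side of the obedience constraint, so increasing $\pi_{00}$ strictly hurts the sender and strictly tightens feasibility; hence at any optimum $\pi^*_{00}=0$. Symmetrically, $\pi_{11}$ enters the objective with coefficient $+\mu_{11}$ and the right-hand side of the constraint with $+\mu_{11}$, so increasing $\pi_{11}$ strictly improves the payoff and strictly relaxes the constraint, forcing $\pi^*_{11}=1$. After this reduction, the sender's problem becomes
\begin{equation*}
\max_{(\pi_{01},\pi_{10})\in[0,1]^2}\;\mu_{11}+\mu_{10}\pi_{10}-\mu_{01}\pi_{01}\quad\text{s.t.}\quad \mu_{10}\pi_{10}-\mu_{01}\pi_{01}\le \mu_{11}.
\end{equation*}

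Next I would split on the sign of $\mu_{11}-\mu_{10}$. If $\mu_{11}\ge\mu_{10}$, then $(\pi_{01},\pi_{10})=(0,1)$ is feasible and attains the obvious upper bound $\mu_{11}+\mu_{10}$ of the objective, yielding $\pi^*=(0,0,1,1)$. If instead $\mu_{11}<\mu_{10}$, the point $(0,1)$ is infeasible; for any feasible $(\beta,\gamma)\equiv(\pi_{01},\pi_{10})$ the objective is bounded by $\mu_{11}+[\mu_{10}\gamma-\mu_{01}\beta]\le 2\mu_{11}$ directly from the constraint, and this upper bound is achieved precisely when the constraint binds, i.e.\ when $\mu_{10}\gamma-\mu_{01}\beta=\mu_{11}$. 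This binding set intersects $[0,1]^2$ nontrivially (for example $\beta=0,\ \gamma=\mu_{11}/\mu_{10}\in[0,1]$ works since $\mu_{11}<\mu_{10}$), so any such $(\beta,\gamma)$ gives an optimal signal of the stated form.

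There is no serious obstacle beyond bookkeeping: the monotonicity argument pins down two of the four coordinates for free, and the remaining two-variable linear program has a one-dimensional face of optima in the non-trivial case. The one small care point is verifying that the binding family $\{\mu_{10}\gamma-\mu_{01}\beta=\mu_{11}\}\cap[0,1]^2$ is non-empty (which is where $\mu_{11}<\mu_{10}$ is used), together with noting that the proposition asserts the existence of an optimal signal of this form rather than uniqueness, so any element of that face suffices.
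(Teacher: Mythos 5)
Your proposal is correct and follows essentially the same route as the paper: fix $\pi^*_{00}=0$ and $\pi^*_{11}=1$ by the monotonicity of the objective and the obedience constraint, then split on the sign of $\mu_{11}-\mu_{10}$, with $(0,0,1,1)$ attaining the unconstrained bound in the first case and the constraint binding at the optimum in the second. The only cosmetic difference is that you establish bindingness by bounding the reduced objective by $2\mu_{11}$ directly from the constraint, whereas the paper uses a marginal-improvement argument; both yield the same one-dimensional family $\mu_{10}\gamma-\mu_{01}\beta=\mu_{11}$ of optima.
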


There are a few aspects to note about Proposition \ref{prop:bench}. 
First, it is intuitive that $\pi^*_{00}=0$ and $\pi^*_{11}=1$, since in states $(\sigma_0,\rho_0)$ and $(\sigma_1,\rho_1)$ the sender's and receiver's preferences are aligned.
In other words, it is optimal for the sender to \emph{disclose the truth} in these states, i.e., she sends message $1$ with probability 0 in state $(\sigma_0,\rho_0)$ and with probability 1 in state $(\sigma_1,\rho_1)$.
More precisely, when the state is $(\sigma_1,\rho_1)$ the sender has an incentive to \emph{increase} the frequency of message $1$, which switches the receiver's action. 
Similarly, when the state is $(\sigma_0,\rho_0)$ the sender has an incentive to \emph{decrease} the frequency of message $1$.
Note that both increasing $\pi_{11}$ and decreasing $\pi_{00}$ \emph{relaxes} the obedience constraint in (\ref{OP_soph}). 

Second, observe that while the sender has an incentive to \emph{decrease} $\pi_{01}$ in state $(\sigma_0,\rho_1)$ and \emph{increase} $\pi_{10}$ in state $(\sigma_1,\rho_0)$, this \emph{tightens} the obedience constraint since the sender's and receiver's preferences are misaligned. 
Therefore, the sender chooses $\pi_{01}$ and $\pi_{10}$ such that the obedience constraint binds. 
However, if the prior for $(\sigma_1,\rho_1)$ is (weakly) higher than the prior for $(\sigma_1,\rho_0)$ (i.e. $\mu_{11}\geq\mu_{10}$), then the effect of relaxing the constraint is stronger than the effect of tightening the constraint.
In this case, since $\pi^*_{01}=0$ and $\pi^*_{10}=1$, the sender can implement her desired outcome with certainty and obtain a payoff of 1. 

\section{Optimal signal when the receiver is naive}
From hereon we assume that the receiver is naive. 
More precisely, while the receiver knows $(\mu(\sigma_0),\mu(\sigma_1))$ and $(\mu(\rho_0),\mu(\rho_1))$, he perceives $\sigma$ and $\rho$ as independent random variables. 
This gives rise to the misspecified prior $\hat\mu$ about $\Omega$, where the receiver assumes that each element of the support of $\hat\mu$ is the product of corresponding marginals, i.e. $\hat\mu_{k\ell}=\mu^\sigma_k\mu^\rho_\ell$. 
We present the misspecified prior in the following table with marginal distributions written on the sides. 
\begin{table}[H]
\begin{centering}
\begin{tabular}{|c|c|c|c}
\cline{1-3} \cline{2-3} \cline{3-3} 
$(\Sigma,\Rho)$ & $\rho_0$ & $\rho_1$ & \tabularnewline
\cline{1-3} \cline{2-3} \cline{3-3} 
$\sigma_0$ & $\mu^\sigma_0\mu^\rho_0$ & $\mu^\sigma_0\mu^\rho_1$ & $\mu(\sigma_0)$\tabularnewline
\cline{1-3} \cline{2-3} \cline{3-3} 
$\sigma_1$ & $\mu^\sigma_1\mu^\rho_0$ & $\mu^\sigma_1\mu^\rho_1$ & \tabularnewline
\cline{1-3} \cline{2-3} \cline{3-3} 
\multicolumn{1}{c}{} & \multicolumn{1}{c}{$\mu(\rho_0)$} & \multicolumn{1}{c}{} & \tabularnewline
\end{tabular}
\par\end{centering}
\caption{Misspecified prior $\hat\mu$.}
\label{tab:mprior}
\end{table}

\noindent We call such a misspecified prior $\hat\mu$ a \emph{simplistic worldview}. Lemma \ref{lem:mprior} below demonstrates that  the simplistic worldview relates to the correct prior in a systematic way.

\begin{lemma}\label{lem:mprior} Let $\mu\in\Delta(\Omega)$ and
$k,\ell\in\{0,1\}$. Let \[\hat{\mu}_{kk}-\mu_{kk} \equiv c.\] In this case,
\[\hat{\mu}_{k\ell}-\mu_{k\ell}=-c, \text{ for }  k\neq\ell.\] \end{lemma}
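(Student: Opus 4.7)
The plan is to exploit the fact that, by construction, the misspecified prior $\hat\mu$ has the same marginals as the true prior $\mu$. This immediately constrains the four differences $\hat\mu_{k\ell}-\mu_{k\ell}$ to lie in a one-dimensional subspace, from which the claim follows by chasing row and column sums.

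More concretely, I would start by noting that for each $k\in\{0,1\}$,
\begin{equation*}
\hat\mu_{k0}+\hat\mu_{k1}=\mu^\sigma_k=\mu_{k0}+\mu_{k1},
\end{equation*}
so the row differences satisfy $(\hat\mu_{k0}-\mu_{k0})+(\hat\mu_{k1}-\mu_{k1})=0$. Analogously for each $\ell\in\{0,1\}$, matching column marginals yields $(\hat\mu_{0\ell}-\mu_{0\ell})+(\hat\mu_{1\ell}-\mu_{1\ell})=0$.

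Setting $c\equiv\hat\mu_{kk}-\mu_{kk}$ for a chosen $k$, the row identity applied to row $k$ gives $\hat\mu_{k\ell}-\mu_{k\ell}=-c$ for $\ell\neq k$, which is exactly the claim. As a byproduct, the column identity applied to column $k$ yields $\hat\mu_{\ell k}-\mu_{\ell k}=-c$ for $\ell\neq k$, and then a second application of the row identity (to the remaining row) gives $\hat\mu_{\ell\ell}-\mu_{\ell\ell}=c$, showing that a single scalar $c$ governs all four deviations.

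There is no real obstacle here; the only thing to be careful about is to state the argument purely in terms of marginal agreement rather than plugging in the explicit product form $\hat\mu_{k\ell}=\mu^\sigma_k\mu^\rho_\ell$, since the latter would require an algebraic identity like $(\mu_{00}+\mu_{01})(\mu_{00}+\mu_{10})-\mu_{00}=\mu_{00}\mu_{11}-\mu_{01}\mu_{10}$ that, while true, obscures the simple structural reason behind the lemma.
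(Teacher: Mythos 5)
Your proof is correct, and it takes a mildly but genuinely different route from the paper's. The paper proves the lemma by direct substitution: it writes $\hat{\mu}_{k\ell}=\mu(\sigma_k)\bigl(1-\mu(\rho_k)\bigr)$, expands $\mu(\sigma_k)=\mu_{k\ell}+\mu_{kk}$, and then uses the definition $c=\hat{\mu}_{kk}-\mu_{kk}=\mu(\sigma_k)\mu(\rho_k)-\mu_{kk}$ to collapse the expression to $-c$ in two lines of algebra. You instead isolate the structural fact doing all the work: $\hat\mu$ and $\mu$ share the same marginals (which is a one-line consequence of the product construction, since $\mu^\rho_0+\mu^\rho_1=1$), and on a $2\times 2$ state space equal row and column sums force the four deviations $\hat\mu_{k\ell}-\mu_{k\ell}$ to be $\pm c$ in the checkerboard pattern. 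Both arguments are short and airtight; yours buys a bit more generality and transparency (it applies to any misspecified prior that preserves the marginals, not just the independent coupling, and it makes clear why a single scalar $c$ parametrizes the distortion, which is exactly how the paper later uses the lemma in Table \ref{tab:mpriordiff} and Lemma \ref{pr:pre-main}), while the paper's computation is self-contained and avoids even the brief marginal-matching verification. The only point to make explicit in a final write-up is that marginal agreement itself follows from the product form, which you correctly flag as being "by construction."
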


\noindent Lemma \ref{lem:mprior} shows that the amount of change in the priors about the states (due to the simplistic worldview) in which preferences are aligned and misaligned is the same but in opposite directions. 
That is, if the naive receiver perceives the states in which preferences are aligned as more likely than the rational receiver (i.e. if $c>0$), then he perceives the states in which preferences are misaligned as less likely than the rational receiver, by the same amount (and vice versa). 
Table \ref{tab:mpriordiff} demonstrates the misspecified prior $\hat\mu$ in terms of the correct prior $\mu$.  
\begin{table}[H]
\begin{centering}
\begin{tabular}{|c|c|c|c}
\cline{1-3} \cline{2-3} \cline{3-3} 
$(\Sigma,\Rho)$ & $\rho_0$ & $\rho_1$ & \tabularnewline
\cline{1-3} \cline{2-3} \cline{3-3} 
$\sigma_0$ & $\mu_{00}+c$ & $\mu_{01}-c$ & $\mu(\sigma_0)$\tabularnewline
\cline{1-3} \cline{2-3} \cline{3-3} 
$\sigma_1$ & $\mu_{10}-c$ & $\mu_{11}+c$ & \tabularnewline
\cline{1-3} \cline{2-3} \cline{3-3} 
\multicolumn{1}{c}{} & \multicolumn{1}{c}{$\mu(\rho_0)$} & \multicolumn{1}{c}{} & \tabularnewline
\end{tabular}
\par\end{centering}
\caption{$\hat\mu$ expressed in terms of $\mu$.}
\label{tab:mpriordiff}
\end{table}

The naive receiver observes the signal chosen by the sender, who knows that the receiver is naive. 
Moreover, he updates his belief according to the signal chosen by the sender and the misspecified prior $\hat\mu$. 
Let $\hat\mu^m(\rho_\ell)$ denote the posterior belief about state $\rho_\ell$ when the receiver updates according to $\hat\mu$. 
Thus, the receiver's posterior that the state is $\rho_1$ upon observing message $1$ is 
\begin{equation}
\hat\mu^{1}(\rho_1)=\frac{\hat\mu_{01}\pi_{01}+\hat\mu_{11}\pi_{11}}{\hat\mu_{00}\pi_{00}+\hat\mu_{10}\pi_{10}+\hat\mu_{01}\pi_{01}+\hat\mu_{11}\pi_{11}}.\label{upating_BR}
\end{equation}

\noindent While the sender calculates her expected utility according to $\mu$, she knows that the naive receiver updates according to $\hat\mu$. 
Therefore, the sender's optimization problem in (\ref{OP_soph}) becomes 
\begin{align}
\max_{\pi\in\Pi}\: \mu_{11}\pi_{11}+\mu_{10}\pi_{10}&-\mu_{01}\pi_{01}-\mu_{00}\pi_{00}\nonumber\\
\text{s.t. } &\hat\mu_{00}\pi_{00}+\hat\mu_{10}\pi_{10}\leq \hat\mu_{01}\pi_{01}+\hat\mu_{11}\pi_{11}.\label{OP_naive}
\end{align}

Let $\hat\pi^*$ denote the optimal signal when the receiver is naive. 
Note that in (\ref{OP_naive}), similar to (\ref{OP_soph}), the sender's optimal choice in states $(\sigma_0, \rho_0)$ and $(\sigma_1, \rho_1)$ relaxes the obedience constraint and thus $\hat{\pi}^*_{00}=0$ and $\hat{\pi}^*_{11}=1$, which allows us to simplify (\ref{OP_naive}).

\begin{lemma}\label{lemma:naive_R_simpl_OP}
The sender's simplified problem is given by
\begin{equation}
    \begin{aligned}
\max_{\pi\in\Pi} \mu_{10}\pi_{10} & - \mu_{01}\pi_{01}\\
s.t. & \quad\hat{\mu}_{10}\pi_{10}-\hat{\mu}_{01}\pi_{01}\leq\hat{\mu}_{11}.
\label{OP_BR_simplified}
\end{aligned}
\end{equation}
\end{lemma}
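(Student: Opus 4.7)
The plan is to reduce the four-variable optimization in (\ref{OP_naive}) by arguing that at any optimum we can take $\hat{\pi}^*_{00}=0$ and $\hat{\pi}^*_{11}=1$, and then to substitute these values into both the objective and the obedience constraint. The hint is already given in the text preceding the lemma: the two ``preference-aligned'' states should be disclosed truthfully, just as in the rational benchmark.

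First I would look at how $\pi_{00}$ and $\pi_{11}$ enter the program. In the objective $\mu_{11}\pi_{11}+\mu_{10}\pi_{10}-\mu_{01}\pi_{01}-\mu_{00}\pi_{00}$, the coefficient on $\pi_{11}$ is $\mu_{11}\geq 0$ and the coefficient on $\pi_{00}$ is $-\mu_{00}\leq 0$. So, holding $\pi_{01},\pi_{10}$ fixed, the objective is weakly increasing in $\pi_{11}$ and weakly decreasing in $\pi_{00}$. Turning to the constraint $\hat\mu_{00}\pi_{00}+\hat\mu_{10}\pi_{10}\leq\hat\mu_{01}\pi_{01}+\hat\mu_{11}\pi_{11}$, the coefficient on $\pi_{00}$ on the LHS is $\hat\mu_{00}\geq 0$ and the coefficient on $\pi_{11}$ on the RHS is $\hat\mu_{11}\geq 0$, so decreasing $\pi_{00}$ and increasing $\pi_{11}$ both relax the constraint.

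Next I would use this monotonicity as a simple dominance argument: given any feasible $\pi=(\pi_{00},\pi_{01},\pi_{10},\pi_{11})$, the modified signal $\pi'=(0,\pi_{01},\pi_{10},1)$ is still feasible and yields a payoff at least as high. Hence there exists an optimum with $\hat{\pi}^*_{00}=0$ and $\hat{\pi}^*_{11}=1$, so we may restrict the maximization to signals of this form.

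Finally I would substitute $\pi_{00}=0$ and $\pi_{11}=1$ into (\ref{OP_naive}). The objective collapses to $\mu_{11}+\mu_{10}\pi_{10}-\mu_{01}\pi_{01}$, and because $\mu_{11}$ is a constant it can be discarded for the purposes of maximization, leaving $\mu_{10}\pi_{10}-\mu_{01}\pi_{01}$. The obedience constraint becomes $\hat\mu_{10}\pi_{10}\leq\hat\mu_{01}\pi_{01}+\hat\mu_{11}$, which rearranges to $\hat\mu_{10}\pi_{10}-\hat\mu_{01}\pi_{01}\leq\hat\mu_{11}$. This is exactly program (\ref{OP_BR_simplified}), as claimed. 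The argument is routine linear programming reasoning; there is no real obstacle beyond checking the signs of the relevant coefficients, so the write-up can be kept short.
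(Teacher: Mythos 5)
Your proposal is correct and follows essentially the same route as the paper: note that increasing $\pi_{11}$ and decreasing $\pi_{00}$ both improve the objective and relax the obedience constraint, set $\hat{\pi}^*_{00}=0$, $\hat{\pi}^*_{11}=1$, and substitute into (\ref{OP_naive}), discarding the additive constant $\mu_{11}$. Your explicit dominance step (replacing any feasible $\pi$ by $(0,\pi_{01},\pi_{10},1)$) just spells out what the paper's monotonicity remark leaves implicit.
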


We now provide the full characterization of the optimal signal for a naive receiver.

\begin{proposition}\label{pr:signal_corr_neglect} 
An optimal signal when the receiver is naive is given by
\begin{equation*}
(\hat\pi^*_{00},\hat\pi^*_{01},\hat\pi^*_{10},\hat\pi^*_{11})=\begin{cases}
(0,0,1,1) & \text{if }\:\hat\mu_{11}\geq\hat\mu_{10},\\
(0,\beta,\gamma,1) & \text{otherwise},
\end{cases}
\end{equation*}
where
\begin{equation*}
\left(\beta,\gamma\right)=\begin{cases}
\left(0,\frac{\hat{\mu}_{11}}{\hat{\mu}_{10}}\right) & \text{if }\:\frac{\hat{\mu}_{01}}{\hat{\mu}_{10}}<\frac{\mu_{01}}{\mu_{10}},\\
\left(1,\frac{\hat{\mu}_{01}+\hat{\mu}_{11}}{\hat{\mu}_{10}}\right) & \text{if }\:\frac{\hat{\mu}_{01}}{\hat{\mu}_{10}}>\frac{\mu_{01}}{\mu_{10}}\text{ and }\frac{\hat{\mu}_{01}+\hat{\mu}_{11}}{\hat{\mu}_{10}}\leq1,\\
\left(\frac{\hat{\mu}_{10}-\hat{\mu}_{11}}{\hat{\mu}_{01}},1\right) & \text{if }\:\frac{\hat{\mu}_{01}}{\hat{\mu}_{10}}>\frac{\mu_{01}}{\mu_{10}}\text{ and }\frac{\hat{\mu}_{01}+\hat{\mu}_{11}}{\hat{\mu}_{10}}\geq1,
\end{cases}
\vspace{0.25cm}
\end{equation*}
and $\hat{\mu}_{10}\gamma-\hat{\mu}_{01}\beta=\hat{\mu}_{11}$ if $(\hat\mu_{01}/\hat\mu_{10})=(\mu_{01}/\mu_{10})$.
\end{proposition}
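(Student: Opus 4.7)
The plan is to exploit Lemma \ref{lemma:naive_R_simpl_OP} to reduce the sender's problem to a two‑variable linear program in $(\beta,\gamma)=(\pi_{01},\pi_{10})\in[0,1]^2$ with the single non‑trivial constraint $\hat\mu_{10}\gamma-\hat\mu_{01}\beta\leq\hat\mu_{11}$. The first dichotomy is whether the unconstrained maximizer $(\beta,\gamma)=(0,1)$ — which pushes the objective $\mu_{10}\gamma-\mu_{01}\beta$ to its largest possible value $\mu_{10}$ — is feasible. Plugging in, feasibility amounts to $\hat\mu_{10}\leq\hat\mu_{11}$, giving the first branch $(\hat\pi_{00}^*,\hat\pi_{01}^*,\hat\pi_{10}^*,\hat\pi_{11}^*)=(0,0,1,1)$.

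When $\hat\mu_{11}<\hat\mu_{10}$, I would argue that at any optimum the obedience constraint binds: otherwise we could raise $\gamma$ slightly (it cannot already be at $1$ since $(\beta,\gamma)=(0,1)$ is infeasible), strictly improving the objective. Binding the constraint gives $\gamma=(\hat\mu_{11}+\hat\mu_{01}\beta)/\hat\mu_{10}$, and substituting into the objective reduces it to a linear function of $\beta$ alone:
\[
\mu_{10}\frac{\hat\mu_{11}+\hat\mu_{01}\beta}{\hat\mu_{10}}-\mu_{01}\beta
= \frac{\mu_{10}\hat\mu_{11}}{\hat\mu_{10}} + \beta\Bigl(\frac{\mu_{10}\hat\mu_{01}}{\hat\mu_{10}}-\mu_{01}\Bigr).
\]
The sign of the coefficient of $\beta$ is precisely the sign of $\hat\mu_{01}/\hat\mu_{10}-\mu_{01}/\mu_{10}$, which is exactly the ratio comparison in the proposition. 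When this quantity is negative ($\hat\mu_{01}/\hat\mu_{10}<\mu_{01}/\mu_{10}$) one takes $\beta=0$, which yields $\gamma=\hat\mu_{11}/\hat\mu_{10}\in[0,1)$. When it is positive, one wants $\beta$ as large as possible while keeping $\gamma\in[0,1]$, which leads to the two remaining subcases depending on whether the associated $\gamma=(\hat\mu_{01}+\hat\mu_{11})/\hat\mu_{10}$ stays below $1$. When this quantity is zero, the objective is constant along the binding line, so any $(\beta,\gamma)$ on it with $\hat\mu_{10}\gamma-\hat\mu_{01}\beta=\hat\mu_{11}$ is optimal.

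The one delicate step I anticipate is the subcase $\hat\mu_{01}/\hat\mu_{10}>\mu_{01}/\mu_{10}$ combined with $(\hat\mu_{01}+\hat\mu_{11})/\hat\mu_{10}>1$: here $\gamma=1$ becomes the binding upper bound rather than the obedience constraint's natural solution, and the optimum is determined by taking $\gamma=1$ and solving for $\beta=(\hat\mu_{10}-\hat\mu_{11})/\hat\mu_{01}$ from the (still binding) obedience constraint. I would verify that this $\beta$ lies in $[0,1]$ using $\hat\mu_{11}<\hat\mu_{10}$ and $(\hat\mu_{01}+\hat\mu_{11})/\hat\mu_{10}\geq 1$, and check that the objective coefficient's sign still favors taking $\beta$ as large as feasible along the $\gamma=1$ edge. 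The rest of the argument is routine bookkeeping on the linear program's vertices.
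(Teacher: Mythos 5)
Your proposal is correct and follows essentially the same route as the paper: reduce to the two-variable linear program of Lemma \ref{lemma:naive_R_simpl_OP}, observe that $(\beta,\gamma)=(0,1)$ is feasible exactly when $\hat\mu_{11}\geq\hat\mu_{10}$, and otherwise use the binding obedience constraint together with the slope comparison $\hat\mu_{01}/\hat\mu_{10}$ versus $\mu_{01}/\mu_{10}$ to pick among the paper's cases (a)--(c), including the knife-edge case of equal slopes. Two cosmetic slips worth fixing: your parenthetical for why the constraint binds only rules out $\gamma=1$ with $\beta=0$ (if $\gamma=1$, $\beta>0$ and the constraint is slack, one instead decreases $\beta$, which strictly improves the objective and stays feasible), and along the $\gamma=1$ edge the objective favors the \emph{smallest} feasible $\beta$, which is exactly the binding-constraint value $(\hat\mu_{10}-\hat\mu_{11})/\hat\mu_{01}$ you already state.
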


The optimal signal in Proposition \ref{pr:signal_corr_neglect} has
several features similar to the one in  Proposition \ref{prop:bench}. 
First, the sender always recommends switching the default action in state $(\sigma_{1},\rho_{1})$ (i.e. $\hat\pi^*_{11}=1$) and never recommends switching in state $(\sigma_{0},\rho_{0})$ (i.e. $\hat{\pi}_{00}^{*}=0$).
Second, when $\hat\mu_{11}\geq\hat\mu_{10}$, the sender can ensure that the receiver takes her desired action with probability 1 (i.e. $\hat{\pi}_{01}^{*}=0$ and $\hat{\pi}_{10}^{*}=1$). 
On the other hand, if $\hat\mu_{11}<\hat\mu_{10}$ then $\hat\pi^*_{01}$ and $\hat\pi^*_{10}$ are pinned down by the relation between the slope of the obedience constraint $(\hat\mu_{01}/\hat\mu_{10})$ and the slope of the objective function $(\mu_{01}/\mu_{10})$.

One very important difference from Proposition \ref{prop:bench}, however, is that when $\mu_{10}\neq\mu_{01}$ (which implies $(\hat\mu_{01}/\hat\mu_{10})\neq(\mu_{01}/\mu_{10})$) the optimal signal in Proposition \ref{pr:signal_corr_neglect} is unique. 
This is a direct consequence of the receiver's naivete, as it leads to the sender and receiver having different priors. 
More precisely, if they have the same prior then the sender is indifferent between marginally increasing $\pi_{10}$ or marginally decreasing $\pi_{01}$.
On the other hand, if they have different priors then this indifference no longer holds as the two options tighten the obedience constraint to a different extent.
Hence, the sender chooses the option that tightens the obedience constraint less as to optimally persuade the receiver.

\section{Welfare implications of a simplistic worldview}
In this section, we proceed beyond characterizing the optimal signal and study a number of general properties of the sender's problem (\ref{OP_naive}) in comparison to the benchmark (\ref{OP_soph}). 
This allows us to characterize the welfare implications of the receiver's naivete and explain the economic intuition behind them. 
Lemma \ref{pr:pre-main} demonstrates how the receiver's naivete
affects the sender's ability to manipulate the receiver's actions.

\begin{lemma}\label{pr:pre-main}
Assume that $c>0$. 
\begin{enumerate}[label=(\roman*)]
    \item For any $\pi\in\Pi$ with $\pi_{00}=0$ and $\pi_{11}=1$ it holds that $\hat{\mu}^{1}(\rho_{1})>\mu^{1}(\rho_{1})$. That is, recommendation 1 induces a higher belief about state $\rho_1$ for a receiver with a simplistic worldview than for a rational receiver. 
    \item For any $\pi\in\Pi$ it holds that $(\hat{\mu}_{11}/\hat{\mu}_{10})+(\hat{\mu}_{01}/\hat{\mu}_{10})\pi_{01}>(\mu_{11}/\mu_{10})+(\mu_{01}/\mu_{10})\pi_{01}$. That is, the obedience constraint in (\ref{OP_BR_simplified}) is more relaxed relative to the benchmark (\ref{OP_soph}). 
\end{enumerate}
If $c<0$, the converse of the inequalities in $(i)$ and $(ii)$ hold. 
\end{lemma}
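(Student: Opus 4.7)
The plan is to exploit Lemma \ref{lem:mprior}, which expresses the entire misspecified prior in terms of the correct prior $\mu$ and the single scalar $c=\hat{\mu}_{11}-\mu_{11}$, with $\hat{\mu}_{00}=\mu_{00}+c$, $\hat{\mu}_{01}=\mu_{01}-c$, $\hat{\mu}_{10}=\mu_{10}-c$, and $\hat{\mu}_{11}=\mu_{11}+c$. Under this reparameterization each of the two inequalities becomes linear in $c$ after clearing denominators, so it suffices in each case to extract the coefficient of $c$ and verify that it is strictly positive on the relevant domain; the $c<0$ converses then follow by reversing signs.

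For part (i), I would substitute $\pi_{00}=0$ and $\pi_{11}=1$ into the expressions for $\mu^{1}(\rho_{1})$ and $\hat{\mu}^{1}(\rho_{1})$, write the difference over a common denominator, and expand. After the $\mu$-only terms cancel, the numerator should collapse to
\[
c\,\pi_{10}\bigl[\mu_{10}(1-\pi_{01})+\mu_{01}\pi_{01}+\mu_{11}\bigr],
\]
while the denominator is the product of the two marginal probabilities that message $1$ is sent, which is positive. Provided $\pi_{10}>0$ (without this, $\pi_{00}=0$ forces message $1$ to reveal $\rho_{1}$ perfectly under either prior, giving equality at $1$), the bracket is strictly positive whenever $\mu_{11}>0$, so the sign of $\hat{\mu}^{1}(\rho_{1})-\mu^{1}(\rho_{1})$ agrees with the sign of $c$.

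For part (ii), I would plug the Lemma \ref{lem:mprior} expressions into the left-hand side of the stated inequality and subtract the right-hand side over the common denominator $\mu_{10}(\mu_{10}-c)$, which is positive since $\hat{\mu}_{10}>0$. After cancellation the numerator should reduce to
\[
c\bigl[\mu_{10}(1-\pi_{01})+\mu_{11}+\mu_{01}\pi_{01}\bigr],
\]
whose bracket is again manifestly positive. This delivers the strict inequality when $c>0$ and, by sign reversal, its converse when $c<0$.

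The argument is not conceptually demanding; the one thing to be careful about is the algebraic bookkeeping, specifically that in both parts the surviving coefficient of $c$ factors into an obviously positive quantity, and that denominators remain positive under the maintained requirements $\hat{\mu}_{10}>0$ and (for part (i)) $\pi_{10}>0$. No independent handling of the $c<0$ case is needed since every intermediate equality goes through symmetrically.
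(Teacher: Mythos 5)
Your proposal is correct and follows essentially the same route as the paper: both substitute the Lemma \ref{lem:mprior} parameterization ($\hat{\mu}_{kk}=\mu_{kk}+c$, $\hat{\mu}_{k\ell}=\mu_{k\ell}-c$ for $k\neq\ell$) and reduce each claim to the sign of $c$ times a positive bracket --- your explicit numerators $c\,\pi_{10}\bigl[\mu_{10}(1-\pi_{01})+\mu_{01}\pi_{01}+\mu_{11}\bigr]$ and $c\bigl[\mu_{10}(1-\pi_{01})+\mu_{11}+\mu_{01}\pi_{01}\bigr]$ check out --- with the paper merely packaging the same algebra as an increment comparison in part $(i)$ and an endpoint-plus-linearity argument in part $(ii)$. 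Your remark that $\pi_{10}>0$ is needed for strictness in $(i)$ (at $\pi_{10}=0$ both posteriors equal $1$) is a valid refinement that the paper's proof passes over silently.
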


\begin{comment}
\item $\hat{\mu}^{1}(\rho_{1})>\mu^{1}(\rho_{1})$ 
    $\Big(\hat{\mu}^{1}(\rho_{1})<\mu^{1}(\rho_{1})\Big)$
for $\pi_{00}=0,\pi_{11}=1$ and any $\pi_{01},\pi_{10}$, i.e., after observing signal $1$ the receiver with a simplistic worldview  has a relatively higher (lower) posterior belief that state is $\rho_1$, as compared to the rational receiver;
\item $\frac{\hat{\mu}_{11}}{\hat{\mu}_{10}}+\frac{\hat{\mu}_{01}}{\hat{\mu}_{10}}\pi_{01}>\frac{\mu_{11}}{\mu_{10}}+\frac{\mu_{01}}{\mu_{10}}\pi_{01}$ 
\Big($\frac{\hat{\mu}_{11}}{\hat{\mu}_{10}}+\frac{\hat{\mu}_{01}}{\hat{\mu}_{10}}\pi_{01}<\frac{\mu_{11}}{\mu_{10}}+\frac{\mu_{01}}{\mu_{10}}\pi_{01}\Big)$ for any $\pi_{01}$, i.e. the obedience constraint in (\ref{OP_BR_simplified})
is relatively more relaxed (more tight) as compared to the benchmark version of the same problem.
\end{comment}

The intuition of Lemma \ref{pr:pre-main} is as follows. 
First, if $c>0$ then by Lemma \ref{lem:mprior} the naive receiver perceives the aligned states $\left(\sigma_{0},\rho_{0}\right)$ and $\left(\sigma_{1},\rho_{1}\right)$ as more likely (and the two misaligned states as less likely) than the rational receiver. 
Hence, the naive receiver associates recommendation 1 with $\rho_1$ more than the rational receiver and thus has a higher posterior belief.  

Part $(ii)$ of the lemma follows easily from $(i)$. 
Since the sender can induce a posterior belief of at least 1/2 that the state is $\rho_1$ when the receiver is naive by recommending 1 more frequently than before, the obedience constraint in (\ref{OP_BR_simplified}) is more relaxed relative to the benchmark (\ref{OP_soph}).
It is easy to see that the converse of the above arguments hold if $c<0$. 

Importantly, Lemma \ref{pr:pre-main} has direct implications on the sender's and receiver's welfare as it pins down the conditions under which it is ``easier'' for the sender to persuade a naive receiver. 
In particular, if $c>0$ then by $(ii)$ the obedience constraint is more slack than in the benchmark. 
Hence, the sender can decrease $\pi_{01}$ and increase $\pi_{10}$ so that the receiver chooses the sender-preferred actions in the misaligned states $(\sigma_0,\rho_1)$ and $(\sigma_1,\rho_0)$ with a higher probability. 
Clearly, this benefits the sender.
Moreover, since total welfare is the same at the optimum in both cases, it follows that naivete harms the receiver.  
Intuitively, converse arguments hold if $c<0$.

The welfare implications of the receiver's simplistic worldview are summarized in Proposition \ref{pr:main}.
\begin{proposition}\label{pr:main}
Let $c>0$.
A comparison of expected utilities yields the following:  
\begin{enumerate}[label=(\roman*)]
    \item The receiver's simplistic worldview harms him and benefits the sender.
    \item The differences in their expected utilities are strict if and only if $\mu_{11}<\mu_{10}$.
\end{enumerate}
If $c<0$, then the converse of $(i)$ holds and the differences in their expected utilities are strict if and only if $\hat\mu_{11}<\hat\mu_{10}$.
\end{proposition}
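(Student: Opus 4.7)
The plan is to reduce both problems~(\ref{OP_soph}) and~(\ref{OP_naive}) to the common simplified form of Lemma~\ref{lemma:naive_R_simpl_OP} (already established: $\pi_{00}^{*}=\hat\pi_{00}^{*}=0$ and $\pi_{11}^{*}=\hat\pi_{11}^{*}=1$). Once this is done, the sender's objective in both problems is the same function of the true prior, namely $T(\pi)\equiv\mu_{10}\pi_{10}-\mu_{01}\pi_{01}$, and only the obedience constraint differs. The first key observation is the welfare identity
\[
V_{\mathrm{send}}(\pi)=(\mu_{00}+\mu_{01}+\mu_{11})+T(\pi),\qquad V_{\mathrm{recv}}(\pi)=(\mu_{00}+\mu_{10}+\mu_{11})-T(\pi),
\]
so that any increase in $T$ is a one-for-one transfer from receiver to sender. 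This reduces the entire welfare comparison to comparing the optimal value $T^{*}$ across the two feasible sets.

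For part (i) with $c>0$, Lemma~\ref{pr:pre-main}(ii) says that, after rewriting the obedience constraint as $\pi_{10}\le(\hat{\mu}_{11}/\hat{\mu}_{10})+(\hat{\mu}_{01}/\hat{\mu}_{10})\pi_{01}$, its right-hand side is strictly larger than that of the rational version for every admissible $\pi_{01}$. Hence the rational feasible set is contained in the naive one. Maximizing the same objective $T$ over a larger set yields $T(\hat\pi^{*})\ge T(\pi^{*})$, which via the welfare identity gives $V_{\mathrm{send}}(\hat\pi^{*})\ge V_{\mathrm{send}}(\pi^{*})$ and $V_{\mathrm{recv}}(\hat\pi^{*})\le V_{\mathrm{recv}}(\pi^{*})$.

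For the strictness claim in (ii), I split on $\mu_{11}$ versus $\mu_{10}$. If $\mu_{11}\ge\mu_{10}$ then $\hat\mu_{11}-\hat\mu_{10}=(\mu_{11}-\mu_{10})+2c>0$, so Propositions~\ref{prop:bench} and~\ref{pr:signal_corr_neglect} deliver the same signal $(0,0,1,1)$, giving equality. If instead $\mu_{11}<\mu_{10}$, Proposition~\ref{prop:bench} gives $\pi^{*}=(0,\beta,\gamma,1)$ with $\mu_{10}\gamma-\mu_{01}\beta=\mu_{11}$; substituting into the naive constraint leaves slack exactly $c(1+\gamma-\beta)$, which is strictly positive because $c>0$ and the corner $(\beta,\gamma)=(1,0)$ is incompatible with $\mu_{10}\gamma-\mu_{01}\beta=\mu_{11}>0$. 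Hence $\pi^{*}$ lies strictly interior to the naive feasible set, and either raising $\gamma$ (if $\gamma<1$) or lowering $\beta$ (if $\beta>0$) is a feasible move that strictly increases $T$; at least one of these is available, so $T(\hat\pi^{*})>T(\pi^{*})$.

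The case $c<0$ is symmetric: Lemma~\ref{pr:pre-main}(ii) reverses, so the naive feasible set is contained in the rational one and the sender is weakly worse off. For strictness, note $\hat\mu_{11}\ge\hat\mu_{10}$ forces $\mu_{11}>\mu_{10}$ (since $2c<0$), making both optima equal $(0,0,1,1)$; whereas if $\hat\mu_{11}<\hat\mu_{10}$ then $\hat\pi^{*}$ saturates its binding naive constraint, and the same slack computation applied in the opposite direction shows $\hat\pi^{*}$ sits strictly inside the rational feasible set, so the rational sender has an improving direction and the inequality is strict. The main obstacle is the strictness argument: one must carefully verify the interior-point slack and simultaneously exhibit a feasible improving direction in $(\pi_{01},\pi_{10})$-space, which requires excluding the degenerate corner via $\mu_{11}>0$ and $\mu_{01},\mu_{10}>0$.
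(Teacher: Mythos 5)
Your proposal is correct, but it reaches the conclusion by a different route than the paper. The paper substitutes the explicit optimal signals from Propositions \ref{prop:bench} and \ref{pr:signal_corr_neglect} into the payoff difference $\nu=\hat V^{*}-V^{*}$ and computes it case by case (cases (a), (b), (c) of Proposition \ref{pr:signal_corr_neglect}), obtaining closed-form expressions such as $\nu=c(\mu_{11}+\mu_{10})/\hat\mu_{10}$, whose signs are read off directly; it then invokes the constant-sum identity to transfer the conclusion to the receiver, and handles strictness by appealing to Lemma \ref{pr:pre-main} together with the first-best conditions $\mu_{11}\geq\mu_{10}$ and $\hat\mu_{11}\geq\hat\mu_{10}$. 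You instead never use the explicit form of $\hat\pi^{*}$: you observe that after the reduction of Lemma \ref{lemma:naive_R_simpl_OP} the two problems share the objective $T(\pi)=\mu_{10}\pi_{10}-\mu_{01}\pi_{01}$, use Lemma \ref{pr:pre-main}(ii) to get nesting of the feasible sets, and conclude $T(\hat\pi^{*})\geq T(\pi^{*})$ by monotonicity of the value in the feasible set; strictness comes from an explicit slack computation ($c(1+\gamma-\beta)$ at the benchmark optimum, and its mirror image for $c<0$) plus a feasible improving direction, ruling out the degenerate corner. Your welfare identity $V_{\mathrm{send}}+V_{\mathrm{recv}}=\text{const}$ coincides with the paper's (\ref{eq:sum_of_S_and_R_util}), and your treatment of the equality cases via $\hat\mu_{11}-\hat\mu_{10}=\mu_{11}-\mu_{10}+2c$ matches the paper's. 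The trade-off: your set-inclusion argument is shorter, more conceptual, and makes the strictness part more rigorous than the paper's terse appeal to Lemma \ref{pr:pre-main}, while the paper's computation yields the exact magnitude of the welfare gap $\nu$ in each regime, which is quantitative information your argument does not produce. One small caveat, which you flag yourself: the improving-direction and corner-exclusion steps need $\mu_{01},\mu_{10},\mu_{11}>0$ (and their naive counterparts), i.e.\ a full-support prior, which the paper also uses implicitly whenever it divides by $\mu_{10}$, $\hat\mu_{10}$, or $\hat\mu_{01}$.
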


\begin{comment}
Assume that $c>0$ $\big(c<0 \big)$. If a receiver has a simplistic worldview,
then 
the sender's expected payoff is relatively higher (lower) and the
receiver's expected payoff is relatively lower (higher), as compared
to the benchmark. Further, the difference is strict if and only if  $\mu_{11}<\mu_{10}$ (if and only if $\hat{\mu}_{11}<\hat{\mu}_{10}$).
\end{comment}

Part $(ii)$ of Proposition \ref{pr:main} follows from the structure of the optimal signals. 
If the sender is facing a rational receiver then she can get her preferred outcome with certainty whenever $\mu_{11}\geq\mu_{10}$. 
Hence, in this case if $c>0$ the sender can only weakly benefit from the receiver being naive. 
Otherwise, the sender can always guarantee a strict improvement. 
On the other hand, if $c<0$ and the sender is facing a naive receiver then she can get her preferred outcome with certainty whenever $\hat\mu_{11}\geq\hat\mu_{10}$.
But since the receiver's naivete is detrimental for the sender if $c<0$, it follows that the sender obtains her preferred outcome with certainty also when the receiver is rational.
In this case, the sender is only (weakly) harmed by the receiver's naivete. 

Proposition \ref{pr:main} can also be interpreted in terms of the receiver's misperception of ex ante preference alignment. 
Recall that by Lemma \ref{lem:mprior} if $c>0$ then the naive receiver perceives the aligned states as more likely than the rational receiver. 
Therefore, Lemma \ref{pr:pre-main} implies that it is easier for the sender to persuade the receiver. 
Hence, \emph{naivete makes the receiver more manipulable and is detrimental}. 
Intuitively, if $c<0$ the naivete of the receiver constrains the sender more and thus the ``correct'' recommendation is sent with a higher probability. 
In this case, \emph{naivete makes the receiver less manipulable and is beneficial}.

\section{Conclusion}
Real-world examples of Bayesian persuasion problems, such as corporate consulting, often involve the sender and receiver caring about different, but connected issues of a problem. 
This paper studies the impact of the receiver's naivete regarding the interconnected nature of the two dimensions of the state of the world on the optimal outcomes for both parties. 
First, as a benchmark, we study the case of a rational receiver, i.e., the sender and receiver both know the correct prior distribution of the states. 
We show that regardless of the receiver being rational or naive, it is optimal for the sender to truthfully disclose the states whenever her preference is aligned with the receiver. 
On the other hand, optimal disclosure in states where preferences are misaligned crucially depends on the receiver's perception of the prior.

Our findings reveal that the receiver's naivete can disadvantage him while favoring the sender, particularly when it leads the receiver to overestimate the likelihood of states where both parties prefer the same action and underestimate states where they have conflicting preferences.
Intuitively, this makes the receiver ``easier'' to persuade; the sender can commit to sending the ``wrong'' message with a higher probability. 
Conversely, if the naivete leads the receiver to perceive aligned states as less likely and misaligned states as more probable, the receiver benefits at the expense of the sender.

\begin{comment}
This paper examines how the receiver's naivete about the interdependence of two dimensions of the state of the world affects the optimal payoffs of the sender and the receiver. 
We show that the receiver's naivete disadvantages him while benefiting the sender when the naivete leads the receiver to perceive states where the sender and receiver prefer the same action as more likely than they are, and states where they prefer opposite actions as less likely than they are. 
Conversely, when the naivete makes the receiver perceive the aligned states as less likely and the misaligned states as more likely, the receiver's naivete benefits him and harms the sender.
\end{comment}

%An assumption that may appear overly stringent is that the receiver must either fully comprehend the interdependence between the two dimensions of the state of the world or believe in their complete independence. While this assumption facilitates analysis, exploring beyond it could be insightful. We may consider a receiver with a partially simplistic worldview, one who perceives only a partial interdependence between the two dimensions. Further investigation in this regard is in order.

\appendixtitleon
\appendixtitletocon
\begin{appendices}

\section{Proofs}
\setcounter{lemma}{0}
    \renewcommand{\thelemma}{\Alph{section}\arabic{lemma}}

\noindent\textbf{\emph{Proof of Lemma \ref{lem:str}}.}
For the obedience constraint to be satisfied, it must hold that $\mu^{1}(\rho_1)\geq 1/2$, which is equivalent to 
\begin{equation*}
\frac{\mu_{01}\pi_{01}+\mu_{11}\pi_{11}}{\mu_{01}\pi_{01}+\mu_{11}\pi_{11}+\mu_{10}\pi_{10}+\mu_{00}\pi_{00}}\geq\frac12.
\end{equation*}

\noindent Rearranging the inequality yields the obedience constraint.\qed

\begin{lemma}\label{lem:suff}
Let $\pi\in\Pi$. 
If $\mu_{00}\pi_{00}+\mu_{10}\pi_{10}\leq \mu_{01}\pi_{01}+\mu_{11}\pi_{11}$, then $\mu_{01}(1-\pi_{01})+\mu_{11}(1-\pi_{11})<\mu_{00}(1-\pi_{00})+\mu_{10}(1-\pi_{10})$.
\end{lemma}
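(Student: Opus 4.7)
The plan is to reduce the target inequality to a statement about the marginals $\mu(\rho_0)$ and $\mu(\rho_1)$ and then invoke the maintained assumption $\mu(\rho_1)<\mu(\rho_0)$ (i.e., that the default action is $0$).

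Concretely, I would first expand the right-hand side of the conclusion as
\[
\mu_{00}(1-\pi_{00})+\mu_{10}(1-\pi_{10})=\mu(\rho_0)-\bigl[\mu_{00}\pi_{00}+\mu_{10}\pi_{10}\bigr],
\]
and similarly rewrite the left-hand side as
\[
\mu_{01}(1-\pi_{01})+\mu_{11}(1-\pi_{11})=\mu(\rho_1)-\bigl[\mu_{01}\pi_{01}+\mu_{11}\pi_{11}\bigr].
\]
Substituting these expressions, the desired conclusion becomes equivalent to
\[
\mu(\rho_1)-\mu(\rho_0)<\bigl[\mu_{01}\pi_{01}+\mu_{11}\pi_{11}\bigr]-\bigl[\mu_{00}\pi_{00}+\mu_{10}\pi_{10}\bigr].
\]

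Now the right-hand side of this inequality is non-negative by the hypothesis of the lemma (the obedience constraint), while the left-hand side is strictly negative because the model assumes $\mu(\rho_1)<\mu(\rho_0)$. Combining the two gives the strict inequality, completing the proof.

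There is no real obstacle here: the argument is a two-line manipulation. The only subtlety worth flagging is that the strictness in the conclusion comes from the strict inequality in the default-action assumption $\mu(\rho_1)<\mu(\rho_0)$, not from the (weak) obedience constraint; this is what guarantees that even when the obedience constraint binds, action $0$ is still strictly preferred after observing message $0$.
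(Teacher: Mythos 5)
Your proof is correct and follows essentially the same route as the paper's: both reduce the claim to combining the (weak) obedience constraint with the strict marginal inequality $\mu(\rho_1)<\mu(\rho_0)$, i.e.\ $\mu_{01}+\mu_{11}<\mu_{00}+\mu_{10}$; your use of marginal notation is just a repackaging of the paper's algebra. Your remark that the strictness comes from the default-action assumption rather than the obedience constraint is accurate and matches the paper's argument.
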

\begin{proof}
Let $\mu_{00}\pi_{00}+\mu_{10}\pi_{10}\leq \mu_{01}\pi_{01}+\mu_{11}\pi_{11}$, which implies $-\mu_{01}\pi_{01}-\mu_{11}\pi_{11}\leq -\mu_{00}\pi_{00}-\mu_{10}\pi_{10}$.
Recall that $\mu(\rho_1)<\mu(\rho_0)$, i.e. $\mu_{01}+\mu_{11}<\mu_{00}+\mu_{10}$.
Hence, it follows that $$\mu_{01}+\mu_{11}-\mu_{01}\pi_{01}-\mu_{11}\pi_{11}<\mu_{00}+\mu_{10}-\mu_{00}\pi_{00}-\mu_{10}\pi_{10}.$$
Rewriting the above inequality provides the desired result. 
Hence, the obedience constraint for $0$ is satisfied.
That is, 
\begin{equation*}
    \mu^{0}(\rho_0)>\frac{\mu_{10}(1-\pi_{10})+\mu_{00}(1-\pi_{00})}{\mu_{10}(1-\pi_{10})+\mu_{00}(1-\pi_{00})+\mu_{01}(1-\pi_{01})+\mu_{11}(1-\pi_{11})}>\frac12.
\end{equation*}
\end{proof}

\noindent\textbf{\emph{Proof of Proposition \ref{prop:bench}}.}
First suppose that $\mu_{10}\leq\mu_{11}$.
It is clear that setting $\pi^*=(\pi^*_{00},\pi^*_{01},\pi^*_{10},\pi^*_{11})=(0,0,1,1)$ satisfies the obedience constraint in (\ref{OP_soph}) and maximizes the sender's expected utility since it provides the upper bound of 1. 
Hence, $\pi^*$ is the unique solution to the sender's optimization problem. 

Now suppose that $\mu_{10}>\mu_{11}$. 
First observe that it is optimal in this case to set $\pi^*_{11}=1$ and $\pi^*_{00}=0$ as well, since it increases the sender's expected utility and relaxes the obedience constraint.  
Hence, substituting $\pi^*_{11}=1$ and $\pi^*_{00}=0$ into the sender's optimization problem and simplifying we obtain
\begin{align}
\max_{\pi\in\Pi}\:\mu_{10}\pi_{10}&-\mu_{01}\pi_{01}\label{opt:sop2}\\ 
\text{s.t. } &\mu_{10}\pi_{10}\leq \mu_{01}\pi_{01}+\mu_{11}.\nonumber
\end{align}

\noindent It is easy to see from (\ref{opt:sop2}) that the objective function is increasing in $\pi_{10}$ and decreasing in $\pi_{01}$.
Hence, if (\ref{opt:sop2}) is not binding then the sender can marginally increase $\pi_{10}$ or decrease $\pi_{01}$ to achieve a higher expected payoff. 
Thus, the constraint is binding at the optimum. 

Therefore, since both the objective function and the constraint are linear (in $\pi_{10}$ and $\pi_{01}$), there exists a continuum of solutions to (\ref{opt:sop2}).\qed\\

\noindent\textbf{\emph{Proof of Lemma \ref{lem:mprior}}.}
Let $\hat\mu_{kk}-\mu_{kk}=c$ and assume that $k\neq\ell$. 
Then,  
\begin{align*}
    \hat\mu_{k\ell}-\mu_{k\ell}=\mu(\sigma_k)(1-\mu(\rho_k))-\mu_{k\ell}&=\mu_{k\ell}+\mu_{kk}-\mu(\sigma_k)\mu(\rho_k)-\mu_{k\ell}\\
    &=\mu(\sigma_k)\mu(\rho_k)-c-\mu(\sigma_k)\mu(\rho_k)=-c.
\end{align*}
\qed

\noindent\textbf{\emph{Proof of Lemma \ref{lemma:naive_R_simpl_OP}}.}
As one can observe from (\ref{OP_naive}), the sender's payoff is increasing in $\pi_{11}$, which relaxes the obedience constraint.  Thus, $\hat{\pi}_{11}^* = 1$. Similar reasoning leads to $\hat{\pi}_{00}^* = 0$. Substituting $\hat{\pi}_{00}^{*}=0,\hat{\pi}_{11}^{*}=1$ into (\ref{OP_naive}) yields the result. \qed\\

\noindent\textbf{\emph{Proof of Proposition \ref{pr:signal_corr_neglect}}.}
The sender's problem is given by
\begin{align*}
\max_{\pi\in\Pi} \mu_{10}\pi_{10} & - \mu_{01}\pi_{01}\\
s.t. & \quad\hat{\mu}_{10}\pi_{10}-\hat{\mu}_{01}\pi_{01}\leq\hat{\mu}_{11}.
\end{align*} 
As the problem is a linear program, the solution lies either on the obedience constraint or on one of the boundaries of the set $\Pi$. 
First, suppose that $\hat{\mu}_{11}\geq \hat{\mu}_{10}$.
Then the solution lies on the boundary and the obedience constraint is inactive for $\hat{\mu}_{11} > \hat{\mu}_{10}$.
It follows easily that $\pi_{01}^{*} = 0,\pi_{10}^{*} = 1$. 

Next, suppose that $\hat{\mu}_{11}< \hat{\mu}_{10}$. 
All three possible types of solutions for this case are presented in Figure \ref{fig:pr_corr_negl_signal_proof}. 
We proceed with characterizing each of the three cases.

\begin{figure}
    \centering
    \includegraphics[width = 0.9\textwidth]{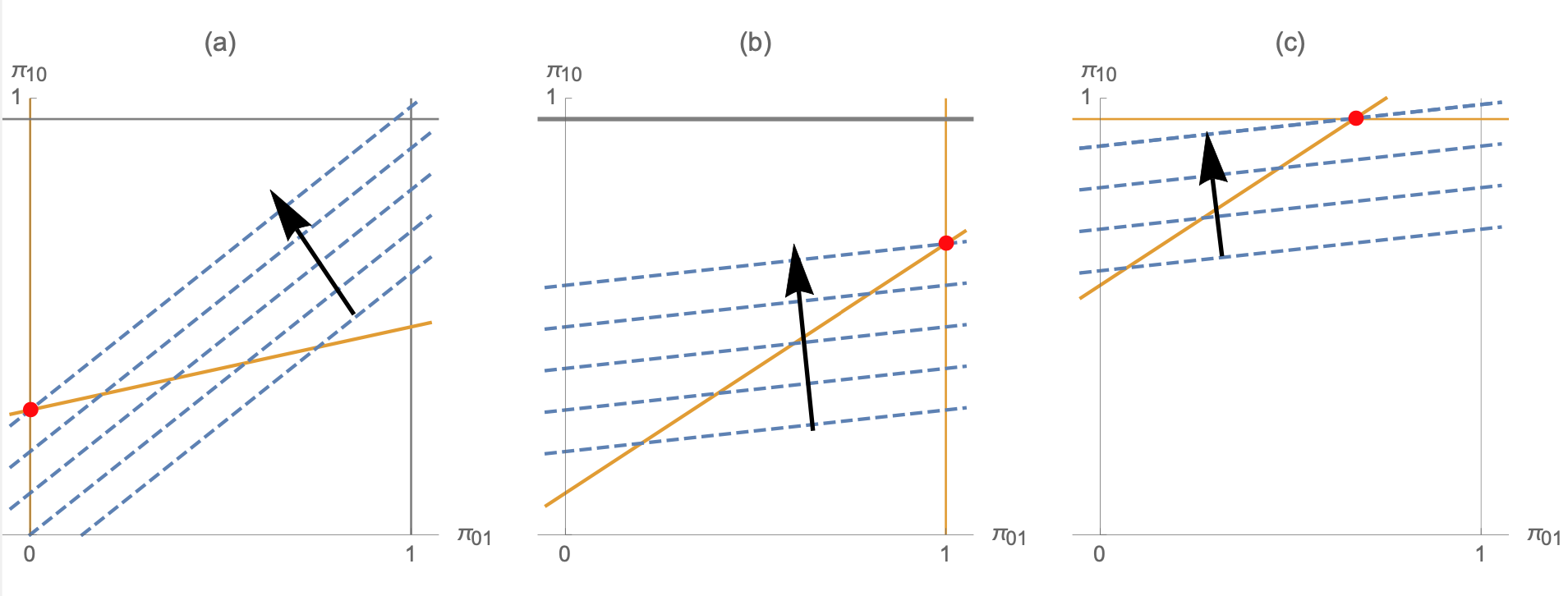}
\setlength{\abovecaptionskip}{0pt}
    \caption{Three possible forms of solutions to the problem (\ref{OP_BR_simplified}). Indifference curves are in blue and dashed. Active constraints are in orange.}
    \label{fig:pr_corr_negl_signal_proof}
\end{figure}

\textbf{Case (a)} The active constraints are $\pi_{01}\geq0$ and $\hat{\mu}_{10}\pi_{10}-\hat{\mu}_{01}\pi_{01}\leq\hat{\mu}_{11}$.
Hence, $\hat{\pi}_{01}^{*}=0$ and $\hat{\pi}_{10}^{*}$
solves $\hat{\mu}_{10}\pi_{10}-\hat{\mu}_{01}\pi_{01}=\hat{\mu}_{11}$, i.e. $\pi^*_{10}=\hat\mu_{11}/\hat\mu_{10}$.
The solution to (\ref{OP_BR_simplified}) has this particular form
if and only if the slope of the objective function is steeper than
the slope of the obedience constraint, i.e. $(\hat{\mu}_{01}/\hat{\mu}_{10})<(\mu_{01}/\mu_{10})$.
%or, in terms of primitives, $\frac{\mu(\sigma_{0})\mu(\rho_{1})}{\mu(\sigma_{1})\mu(\rho_{0})}<\frac{\mu_{01}}{\mu_{10}}$.

\textbf{Case (b)} The active constraints are $\pi_{01}\leq1$ and $\hat{\mu}_{10}\pi_{10}-\hat{\mu}_{01}\pi_{01}\leq\hat{\mu}_{11}$.
Hence, $\hat{\pi}_{01}^{*}=1$ and $\hat{\pi}_{10}^{*}$
solves $\hat{\mu}_{10}\pi_{10}-\hat{\mu}_{01}\pi_{01}=\hat{\mu}_{11}$, i.e. $\hat{\pi}_{10}^{*}=(\hat{\mu}_{01}+\hat{\mu}_{11})/\hat{\mu}_{10}$.
The solution to (\ref{OP_BR_simplified}) has this form if and only
if (i) $(\hat{\mu}_{01}/(\hat{\mu}_{10})>(\mu_{01}/\mu_{10})$ (i.e. the slope of the objective function is flatter than the slope of the obedience constraint)  %$\iff\frac{\mu(\sigma_{0})\mu(\rho_{1})}{\mu(\sigma_{1})\mu(\rho_{0})}>\frac{\mu_{01}}{\mu_{10}}$,
and (ii)  $(\hat{\mu}_{01}+\hat{\mu}_{11})/\hat{\mu}_{10}\leq 1$ (which implies that the constraint $\pi_{10}\leq1$ is inactive). %$\iff\frac{\mu(\rho_{1})}{\mu(\sigma_{1})\mu(\rho_{0})}\leq1$.

\textbf{Case (c)} The active constraints are $\pi_{10}\leq1$ and $\hat{\mu}_{10}\pi_{10}-\hat{\mu}_{01}\pi_{01}\leq\hat{\mu}_{11}$.
Hence, $\hat{\pi}_{10}^{*}=1$ and $\hat{\pi}_{01}^{*}$
solves $\hat{\mu}_{10}\pi_{10}-\hat{\mu}_{01}\pi_{01}=\hat{\mu}_{11}$, i.e. $\hat{\pi}_{01}^{*}=(\hat{\mu}_{10}-\hat{\mu}_{11})/\hat{\mu}_{01}$.
The solution to (\ref{OP_BR_simplified}) has this form if and only
if (i) $\hat{\mu}_{01}/\hat{\mu}_{10}>\mu_{01}/\mu_{10}$,
and (ii)  $(\hat{\mu}_{01}+\hat{\mu}_{11})/\hat{\mu}_{10}\geq 1$ (which implies that the constraint $\pi_{10}\leq1$ is active).

Finally, if $(\hat{\mu}_{01}/\hat{\mu}_{10})=(\mu_{01}/\mu_{10})$, then any $\pi_{10}$ and $\pi_{01}$ such that $\hat{\mu}_{10}\pi_{10}-\hat{\mu}_{01}\pi_{01}=\hat{\mu}_{11}$
solves (\ref{OP_BR_simplified}). \qed\\

\noindent\textbf{\emph{Proof of Lemma \ref{pr:pre-main}}.}
We begin by proving part $(i)$. 
Since it is optimal to set $\hat\pi^*_{00}=0$ and $\hat\pi^*_{11}=1$, (\ref{upating_BR}) can be rewritten as
\[
\hat{\mu}^{1}\left(\rho_{1}\right)=\frac{\hat{\mu}_{01}\pi_{01}+\hat{\mu}_{11}}{\hat{\mu}_{01}\pi_{01}+\hat{\mu}_{11}+\hat{\mu}_{10}\pi_{10}}.
\]
By Lemma \ref{lem:mprior}, it follows that
\[
\begin{aligned}
\hat{\mu}^{1}\left(\rho_{1}\right) &=\frac{\left(\mu_{01}-c\right)\pi_{01}+\left(\mu_{11}+c\right)}{\left(\mu_{01}-c\right)\pi_{01}+\left(\mu_{11}+c\right)+\left(\mu_{10}-c\right)\pi_{10}}\\
&=\frac{\mu_{01}\pi_{01}+\mu_{11}+c\left(1-\pi_{01}\right)}{\mu_{01}\pi_{01}+\mu_{11}+\mu_{10}\pi_{10}+c\left(1-\pi_{01}-\pi_{10}\right)}.
\end{aligned}
\]
Suppose that $c>0$.
Then it holds that $c\left(1-\pi_{01}\right)>c\left(1-\pi_{01}-\pi_{10}\right)$,
and thus 
\[
\mu^{1}\left(\rho_{1}\right)=\frac{\mu_{01}\pi_{01}+\mu_{11}}{\mu_{01}\pi_{01}+\mu_{11}+\mu_{10}\pi_{10}}<\frac{\mu_{01}\pi_{01}+\mu_{11}+c\left(1-\pi_{01}\right)}{\mu_{01}\pi_{01}+\mu_{11}+\mu_{10}\pi_{10}+c\left(1-\pi_{01}-\pi_{10}\right)}=\hat{\mu}^{1}\left(\rho_{1}\right).
\]

Next suppose that $c<0$.
Then it holds that $c\left(1-\pi_{01}\right)<c\left(1-\pi_{01}-\pi_{10}\right)$,
and thus
\[
\mu^{1}\left(\rho_{1}\right)=\frac{\mu_{01}\pi_{01}+\mu_{11}}{\mu_{01}\pi_{01}+\mu_{11}+\mu_{10}\pi_{10}}>\frac{\mu_{01}\pi_{01}+\mu_{11}+c\left(1-\pi_{01}\right)}{\mu_{01}\pi_{01}+\mu_{11}+\mu_{10}\pi_{10}+c\left(1-\pi_{01}-\pi_{10}\right)}=\hat{\mu}^{1}\left(\rho_{1}\right).
\]

Now we prove part $(ii)$. 
Recall the sender's simplified problems (\ref{OP_soph}) for the rational receiver and (\ref{OP_BR_simplified}) for the naive receiver. 
\begin{comment}
Given that at
optimum $\pi_{00}=0$ and $\pi_{11}=1$, the simplified version of the sender's problem
(\ref{OP_soph}) writes out as

\[
\begin{aligned}\max_{\pi\in\Pi}\:\mu_{10}\pi_{10} & -\mu_{01}\pi_{01}\\
\text{s.t. } & \mu_{10}\pi_{10}\leq\mu_{01}\pi_{01}+\mu_{11}.
\end{aligned}
\]
Further, the simplified sender's problem for the case of naive receiver,
 given by (\ref{OP_BR_simplified}), is

\[
\begin{aligned}\max_{\pi\in\Pi}\:\mu_{10}\pi_{10} & -\mu_{01}\pi_{01}\\
\text{s.t. } & \hat{\mu}_{10}\pi_{10}\leq\hat{\mu}_{01}\pi_{01}+\hat{\mu}_{11}.
\end{aligned}
\]
\end{comment}
Rewriting the constraints in (\ref{OP_soph}) and (\ref{OP_BR_simplified}), respectively, yields
\[
\pi_{10}\leq\frac{\mu_{01}}{\mu_{10}}\pi_{01}+\frac{\mu_{11}}{\mu_{10}}, \hspace{10pt} \pi_{10}\leq\frac{\hat{\mu}_{01}}{\hat{\mu}_{10}}\pi_{01}+\frac{\hat{\mu}_{11}}{\hat{\mu}_{10}}.
\]

Define $g_{1}\left(\pi_{01}\right)\equiv(\mu_{01}/\mu_{10})\pi_{01}+(\mu_{11}/\mu_{10})$ and $g_{2}\left(\pi_{01}\right)\equiv(\hat{\mu}_{01}/\hat{\mu}_{10})\pi_{01}+(\hat{\mu}_{11}/\hat{\mu}_{10})$.
By Lemma \ref{lem:mprior} it follows that
\[
g_{2}\left(\pi_{01}\right)=\frac{\mu_{01}-c}{\mu_{10}-c}\pi_{01}+\frac{\mu_{11}+c}{\mu_{10}-c}.
\]
Consider $g_{1}\left(\cdot\right)$ and $g_{2}\left(\cdot\right)$ at the
boundaries of the set of feasible signals: $g_{1}\left(0\right)=(\mu_{11}/\mu_{10}),$
$g_{2}\left(0\right)=(\mu_{11}+c)/(\mu_{10}-c)$ and $g_{1}\left(1\right)=(\mu_{01}+\mu_{11})/\mu_{10},$
$g_{2}\left(1\right)=(\mu_{01}+\mu_{11})/(\mu_{10}-c)$.

First suppose that $c>0$. 
It follows that $g_{1}\left(0\right)<g_{2}\left(0\right)$
and $g_{1}\left(1\right)<g_{2}\left(1\right)$. 
Moreover, by linearity
of $g_{1}\left(\cdot\right)$ and $g_{2}\left(\cdot\right)$, for all $\pi_{01}\in[0,1]$ it holds that
\[
g_{1}\left(\pi_{01}\right)<g_{2}\left(\pi_{01}\right).
\]
Thus, the constraint in (\ref{OP_soph}) is relatively more slack than in (\ref{OP_BR_simplified}).

Next suppose that $c<0$. 
It follows that $g_{1}\left(0\right)>g_{2}\left(0\right)$
and $g_{1}\left(1\right)>g_{2}\left(1\right)$. 
Again, by linearity
of $g_{1}\left(\cdot\right)$ and $g_{2}\left(\cdot\right)$, for all $\pi_{01}\in[0,1]$ it holds that
\[
g_{1}\left(\pi_{01}\right)>g_{2}\left(\pi_{01}\right).
\]
Thus, the constraint in (\ref{OP_soph}) is relatively more tight than in (\ref{OP_BR_simplified}).
\qed \\

\noindent\textbf{\emph{Proof of Proposition \ref{pr:main}}.} 
Let $V^*$ and $\hat V^*$ denote the sender's optimal (ex-ante) expected payoff when the receiver is rational and when he is naive, respectively. 
Then,
\[
\begin{aligned}
V^{*}\equiv\mu_{11}\pi_{11}^{*}+\mu_{10}\pi_{10}^{*}+\mu_{01}(1-\pi_{01}^{*})+\mu_{00}(1-\pi_{00}^{*}),\\
\hat{V}^{*}\equiv\mu_{11}\hat{\pi}_{11}^{*}+\mu_{10}\hat{\pi}_{10}^{*}+\mu_{01}(1-\hat{\pi}_{01}^{*})+\mu_{00}(1-\hat{\pi}_{00}^{*}).
\end{aligned}\]

As the form of optimal signal in Propositions \ref{prop:bench} and \ref{pr:signal_corr_neglect} depends on $sign(\mu_{11}- \mu_{10})$, we consider four cases based on $sign(\mu_{11}- \mu_{10})$ and $sign(c)$.
First, assume $\mu_{11}\geq\mu_{10}$. If $c\geq0$, then from Proposition \ref{prop:bench} and \ref{pr:signal_corr_neglect} and Lemma \ref{lem:mprior},  $(\pi_{00}^{*},\pi_{01}^{*},\pi_{10}^{*},\pi_{11}^{*})=$ $(\hat{\pi}_{00}^{*},\hat{\pi}_{01}^{*},\hat{\pi}_{10}^{*},\hat{\pi}_{11}^{*})=(0,0,1,1)$. 
Thus, $V^{*}=\hat{V}^{*}=1$. If $c<0$, then either $\hat{\mu}_{11}\geq\hat{\mu}_{10}$ and, as above, $V^{*}=\hat{V}^{*}=1$, or $\hat{\mu}_{11}<\hat{\mu}_{10}$ and (as, according to Proposition \ref{pr:signal_corr_neglect}, the sender does not get her first best payoff) $\hat{V}^{*}<V^{*}=1$. Thus, it follows that 
\begin{equation}
\hat{V}^{*}\geq V^{*} \text{ if and only if } c\geq0.
\label{eq:V_compar_result_1}
\end{equation}

Second, assume $\mu_{11}<\mu_{10}$.
By Proposition \ref{prop:bench}, the difference in the sender's optimal expected payoffs between
the benchmark and the naive receiver case is given by
\begin{equation}
\begin{aligned}\nu\equiv\hat{V}^{*}-V^{*} & =\mu_{10}\left(\hat{\pi}_{10}^{*}-\pi_{10}^{*}\right)+\mu_{01}\left(\pi_{01}^{*}-\hat{\pi}_{01}^{*}\right)\\
 & =\mu_{10}\left(\hat{\pi}_{10}^{*}-\gamma\right)+\mu_{01}\left(\frac{\mu_{10}}{\mu_{01}}\gamma-\frac{\mu_{11}}{\mu_{01}}-\hat{\pi}_{01}^{*}\right)\\
 & =\mu_{10}\hat{\pi}_{10}^{*}-\mu_{01}\hat{\pi}_{01}^{*}-\mu_{11}.
\end{aligned} \label{eq:V_compar}
\end{equation}

Plugging the optimal signal for case (a) from Proposition  \ref{pr:signal_corr_neglect} in (\ref{eq:V_compar}) yields
\begin{equation}
\begin{aligned}
\nu &=\mu_{10}\frac{\hat{\mu}_{11}}{\hat{\mu}_{10}}-\mu_{11}
=\mu_{10}\frac{\mu_{11}+c}{\hat{\mu}_{10}}-\mu_{11} \\&=\frac{\mu_{10}\mu_{11}+\mu_{11}c-\mu_{10}\mu_{11}+\mu_{10}c}{\hat{\mu}_{10}}=\frac{c\left(\mu_{11}+\mu_{10}\right)}{\hat{\mu}_{10}}.
\end{aligned} \label{eq:main_a}
\end{equation}

Similarly, plugging the optimal signal for case (b) from Proposition  \ref{pr:signal_corr_neglect} in (\ref{eq:V_compar}) yields
\begin{equation}
\begin{aligned}
\nu=&\mu_{10}\frac{\hat{\mu}_{01}+\hat{\mu}_{11}}{\hat{\mu}_{10}}-\mu_{01}-\mu_{11}\\
&=\mu_{10}\frac{\mu_{01}-c+\mu_{11}+c}{\hat{\mu}_{10}}-\mu_{01}-\mu_{11}=\frac{c\left(\mu_{11}+\mu_{01}\right)}{\hat{\mu}_{10}}.
\end{aligned} \label{eq:main_b}
\end{equation}

Finally, plugging  the optimal signal for case (c) from Proposition  \ref{pr:signal_corr_neglect} in (\ref{eq:V_compar}) yields
\begin{equation}
\begin{aligned}
\nu&=\mu_{10}-\mu_{01}\frac{\hat{\mu}_{10}-\hat{\mu}_{11}}{\hat{\mu}_{01}}-\mu_{11}\\&=
\frac{2c\mu_{01}+\left(\mu_{10}-\mu_{11}\right)\left(\hat{\mu}_{01}\right)-\mu_{01}\left(\mu_{10}-\mu_{11}\right)}{\hat{\mu}_{01}}=\frac{c\left(2\mu_{01}+\mu_{11}-\mu_{10}\right)}{\hat{\mu}_{01}}\\&=
\frac{c\left(\mu(\rho_1)-\mu_{10}+\mu_{01}\right)}{\hat{\mu}_{01}}.
\end{aligned} \label{eq:main_c} 
\end{equation}
From the parametric condition for the case (c), $(\hat{\mu}_{01}+\hat{\mu}_{11})/\hat{\mu}_{10}\geq 1$ and thus it follows that $\mu(\rho_1)\geq\hat{\mu}_{10}$.
By Lemma \ref{lem:mprior}, this can be rewritten as $\mu(\rho_1) - \mu_{10}\geq  - c$. 
Since we also have $\mu_{01}\geq c$ by Lemma \ref{lem:mprior}, it holds that
\begin{equation}
    \mu(\rho_1)-\mu_{10}+\mu_{01}\geq0.
    \label{eq:case_c_cond}
\end{equation}
%$\frac{\hat{\mu}_{01}+\hat{\mu}_{11}}{\hat{\mu}_{10}}\geq0\iff\mu_{11}+\left(\mu_{01}+c\right)-\mu_{10}\geq0.$

By (\ref{eq:main_a}), (\ref{eq:main_b}),  (\ref{eq:main_c}), and (\ref{eq:case_c_cond}) it follows that,
\begin{equation}
\hat{V}^{*}\geq V^{*} \text{ if and only if } c\geq0.
\label{eq:V_compar_result}
\end{equation}

Note that the sum of optimal expected payoffs of the sender and the receiver in the benchmark and  the case of naive receiver is given by the same constant: 
\begin{equation}
\begin{aligned}[b]
&\underbrace{\mu_{11}\pi_{11}^{*} + \mu_{00}(1-\pi_{00}^{*}) + \mu_{10}\pi_{10}^{*} + \mu_{01}(1-\pi_{01}^{*})}_{\text{the sender's expected payoff}} \\
&\quad\quad + \underbrace{\mu_{11}\pi_{11}^{*} + \mu_{00}(1-\pi_{00}^{*}) + \mu_{10}\left(1-\pi_{10}^{*}\right) + \mu_{01}\pi_{01}^{*}}_\text{the receiver's expected payoff} \\
&= 2\left(\mu_{11}\pi_{11}^{*} + \mu_{00}(1-\pi_{00}^{*})\right) + \mu_{10} + \mu_{01} \\
&= 2\left(\mu_{11} + \mu_{00}\right) + \mu_{10} + \mu_{01},
\end{aligned} \label{eq:sum_of_S_and_R_util}
\end{equation}
where both use the correct (sophisticated) prior belief for evaluating the receiver's expected utility when he is naive.   

Denote the receiver's sender-optimal expected payoff in the benchmark and the case of naive receiver by $U^*$ and $\hat{U}^{*}$, respectively. 
Since by (\ref{eq:sum_of_S_and_R_util}) the sum of sender's and receiver's expected payoffs does not change between the benchmark and the case of naive receiver and the sender's expected payoff changes according to (\ref{eq:V_compar_result_1}) and (\ref{eq:V_compar_result}), it follows that
\begin{equation}
\hat{U}^{*}\leq U^{*} \text{ if and only if } c\geq0,
\label{eq:V_compar_result_2}
\end{equation}
i.e. when the receiver's naivete yields a higher expected payoff to  the sender, it yields a lower expected payoff to the receiver.  

Finally, given $c>0$, from Lemma \ref{pr:pre-main}, the receiver's naivete relaxes the obedience constraint. 
Thus, the difference between the expected payoffs in (\ref{eq:V_compar_result_1}), (\ref{eq:V_compar_result}), and (\ref{eq:V_compar_result_2}) is strict if and only if in the benchmark case the sender does not obtain her first best payoff of $1$, which is the case if and only if $\mu_{11}<\mu_{10}$. 
Similarly, given $c<0$, by Lemma \ref{pr:pre-main}, the receiver's naivete tightens the obedience constraint. 
Thus, the difference between the expected payoffs in (\ref{eq:V_compar_result_1}), (\ref{eq:V_compar_result}), and (\ref{eq:V_compar_result_2}) is strict if and only if the sender does not obtain her first best payoff of $1$ in the naive receiver case, which is true if and only if $\hat{\mu}_{11}<\hat{\mu}_{10}$.\qed \\

\end{appendices}

\addcontentsline{toc}{section}{References} 
\bibliography{main}

\begin{thebibliography}{}

\bibitem[\protect\citeauthoryear{Alonso and C{\^a}mara}{Alonso and C{\^a}mara}{2016}]{alonso2016bayesian}
Alonso, R. and O.~C{\^a}mara (2016).
\newblock Bayesian persuasion with heterogeneous priors.
\newblock {\em Journal of Economic Theory\/}~{\em 165}, 672--706.

\bibitem[\protect\citeauthoryear{Ambrus and Egorov}{Ambrus and Egorov}{2017}]{ambrus2017delegation}
Ambrus, A. and G.~Egorov (2017).
\newblock Delegation and nonmonetary incentives.
\newblock {\em Journal of Economic Theory\/}~{\em 171}, 101--135.

\bibitem[\protect\citeauthoryear{Augias and Barreto}{Augias and Barreto}{2020}]{augias2020persuading}
Augias, V. and D.~Barreto (2020).
\newblock Persuading a wishful thinker.
\newblock {\em arXiv preprint arXiv:2011.13846\/}.

\bibitem[\protect\citeauthoryear{Austen-Smith and Banks}{Austen-Smith and Banks}{2000}]{austen2000cheap}
Austen-Smith, D. and J.~S. Banks (2000).
\newblock Cheap talk and burned money.
\newblock {\em Journal of Economic Theory\/}~{\em 91\/}(1), 1--16.

\bibitem[\protect\citeauthoryear{Babichenko, Talgam-Cohen, Xu, and Zabarnyi}{Babichenko et~al.}{2022}]{babichenko2022regret}
Babichenko, Y., I.~Talgam-Cohen, H.~Xu, and K.~Zabarnyi (2022).
\newblock {Regret-minimizing Bayesian persuasion}.
\newblock {\em Games and Economic Behavior\/}~{\em 136}, 226--248.

\bibitem[\protect\citeauthoryear{de~Clippel and Zhang}{de~Clippel and Zhang}{2022}]{de2022non}
de~Clippel, G. and X.~Zhang (2022).
\newblock {Non-Bayesian persuasion}.
\newblock {\em Journal of Political Economy\/}~{\em 130\/}(10), 2594--2642.

\bibitem[\protect\citeauthoryear{Dworczak and Kolotilin}{Dworczak and Kolotilin}{2019}]{dworczak2019persuasion}
Dworczak, P. and A.~Kolotilin (2019).
\newblock The persuasion duality.
\newblock {\em arXiv preprint arXiv:1910.11392\/}.

\bibitem[\protect\citeauthoryear{Eliaz, Spiegler, and Thysen}{Eliaz et~al.}{2021a}]{eliaz2021persuasion}
Eliaz, K., R.~Spiegler, and H.~C. Thysen (2021a).
\newblock Persuasion with endogenous misspecified beliefs.
\newblock {\em European Economic Review\/}~{\em 134}, 103712.

\bibitem[\protect\citeauthoryear{Eliaz, Spiegler, and Thysen}{Eliaz et~al.}{2021b}]{eliaz2021strategic}
Eliaz, K., R.~Spiegler, and H.~C. Thysen (2021b).
\newblock Strategic interpretations.
\newblock {\em Journal of Economic Theory\/}~{\em 192}, 105192.

\bibitem[\protect\citeauthoryear{Enke and Zimmermann}{Enke and Zimmermann}{2019}]{enke2019correlation}
Enke, B. and F.~Zimmermann (2019).
\newblock Correlation neglect in belief formation.
\newblock {\em The Review of Economic Studies\/}~{\em 86\/}(1), 313--332.

\bibitem[\protect\citeauthoryear{Eyster and Rabin}{Eyster and Rabin}{2005}]{eyster2005cursed}
Eyster, E. and M.~Rabin (2005).
\newblock Cursed equilibrium.
\newblock {\em Econometrica\/}~{\em 73\/}(5), 1623--1672.

\bibitem[\protect\citeauthoryear{Eyster and Weizs{\"a}cker}{Eyster and Weizs{\"a}cker}{2011}]{eyster2011correlation}
Eyster, E. and G.~Weizs{\"a}cker (2011).
\newblock Correlation neglect in financial decision-making, diw discussion papers, no. 1104.

\bibitem[\protect\citeauthoryear{Galperti}{Galperti}{2019}]{galperti2019persuasion}
Galperti, S. (2019).
\newblock Persuasion: The art of changing worldviews.
\newblock {\em American Economic Review\/}~{\em 109\/}(3), 996--1031.

\bibitem[\protect\citeauthoryear{Hagmann and Loewenstein}{Hagmann and Loewenstein}{2017}]{hagmann2017persuasion}
Hagmann, D. and G.~Loewenstein (2017).
\newblock Persuasion with motivated beliefs.
\newblock In {\em Opinion Dynamics \& Collective Decisions Workshop}.

\bibitem[\protect\citeauthoryear{Kallir and Sonsino}{Kallir and Sonsino}{2009}]{kallir2009neglect}
Kallir, I. and D.~Sonsino (2009).
\newblock The neglect of correlation in allocation decisions.
\newblock {\em Southern Economic Journal\/}~{\em 75\/}(4), 1045--1066.

\bibitem[\protect\citeauthoryear{Kamenica and Gentzkow}{Kamenica and Gentzkow}{2011}]{kamenica2011bayesian}
Kamenica, E. and M.~Gentzkow (2011).
\newblock Bayesian persuasion.
\newblock {\em American Economic Review\/}~{\em 101\/}(6), 2590--2615.

\bibitem[\protect\citeauthoryear{Kartik}{Kartik}{2007}]{kartik2007note}
Kartik, N. (2007).
\newblock A note on cheap talk and burned money.
\newblock {\em Journal of Economic Theory\/}~{\em 136\/}(1), 749--758.

\bibitem[\protect\citeauthoryear{Kosterina}{Kosterina}{2022}]{kosterina2022persuasion}
Kosterina, S. (2022).
\newblock Persuasion with unknown beliefs.
\newblock {\em Theoretical Economics\/}~{\em 17\/}(3), 1075--1107.

\bibitem[\protect\citeauthoryear{Levy, Barreda, and Razin}{Levy et~al.}{2022}]{levy2022persuasion}
Levy, G., I.~M.~d. Barreda, and R.~Razin (2022).
\newblock Persuasion with correlation neglect: a full manipulation result.
\newblock {\em American Economic Review: Insights\/}~{\em 4\/}(1), 123--138.

\bibitem[\protect\citeauthoryear{Levy and Razin}{Levy and Razin}{2015}]{levy2015correlation}
Levy, G. and R.~Razin (2015).
\newblock Correlation neglect, voting behavior, and information aggregation.
\newblock {\em American Economic Review\/}~{\em 105\/}(4), 1634--1645.

\bibitem[\protect\citeauthoryear{Malamud and Schrimpf}{Malamud and Schrimpf}{2021}]{malamud2021persuasion}
Malamud, S. and A.~Schrimpf (2021).
\newblock Persuasion by dimension reduction.
\newblock {\em arXiv preprint arXiv:2110.08884\/}.

\bibitem[\protect\citeauthoryear{Ortoleva and Snowberg}{Ortoleva and Snowberg}{2015}]{ortoleva2015overconfidence}
Ortoleva, P. and E.~Snowberg (2015).
\newblock Overconfidence in political behavior.
\newblock {\em American Economic Review\/}~{\em 105\/}(2), 504--535.

\bibitem[\protect\citeauthoryear{Rayo and Segal}{Rayo and Segal}{2010}]{rayo2010optimal}
Rayo, L. and I.~Segal (2010).
\newblock Optimal information disclosure.
\newblock {\em Journal of political Economy\/}~{\em 118\/}(5), 949--987.

\bibitem[\protect\citeauthoryear{Tamura}{Tamura}{2018}]{tamura2018bayesian}
Tamura, W. (2018).
\newblock Bayesian persuasion with quadratic preferences.
\newblock {\em Available at SSRN 1987877\/}.

\bibitem[\protect\citeauthoryear{Tsakas, Tsakas, and Xefteris}{Tsakas et~al.}{2021}]{tsakas2021resisting}
Tsakas, E., N.~Tsakas, and D.~Xefteris (2021).
\newblock Resisting persuasion.
\newblock {\em Economic Theory\/}~{\em 72\/}(3), 723--742.

\end{thebibliography}
\bibliographystyle{chicago}

\end{document}